\documentclass[12pt]{article}
\usepackage[T1]{fontenc}
\usepackage[utf8]{inputenc}
\usepackage[letterpaper]{geometry}
\usepackage{color}
\usepackage{mathtools}
\usepackage{amsmath}
\usepackage{amsthm}
\usepackage{amssymb}
\usepackage{graphicx}
\usepackage{setspace}
\usepackage[authoryear]{natbib}
\usepackage[unicode=true,
  bookmarks=false,
breaklinks=false,pdfborder={0 0 1},backref=section,colorlinks=true]
{hyperref}
\hypersetup{
linkcolor=blue,citecolor=blue,urlcolor=blue}

\makeatletter
\providecommand{\tabularnewline}{\\}

\usepackage{bm}
\usepackage{enumitem}
\usepackage{booktabs}

\newtheorem{assumption}{Assumption}\newtheorem{proposition}{Proposition}\newtheorem{theorem}{Theorem}\newtheorem{remark}{Remark}\newtheorem{definition}{Definition}\newtheorem{example}{Example}

\newcommand{\E}{\mathbb{E}}
\newcommand{\Prob}{\mathbb{P}}
\newcommand{\R}{\mathbb{R}}
\newcommand{\ind}{\mathbf{1}}

\newcommand{\N}{\mathcal{N}}

\makeatother

\begin{document}
\title{Tractable Identification of Strategic Network Formation Models with
Unobserved Heterogeneity\footnote{We thank Xiaohong Chen, Bryan Graham, Roger Moon, Elie Tamer, and Rui Wang for helpful comments and suggestions.}}
\author{Wayne Yuan Gao\thanks{Department of Economics, University of Pennsylvania. Email: \texttt{waynegao@upenn.edu}}
  \and Ming Li\thanks{Department of Economics and Risk Management Institute, National University of Singapore. Email:
\texttt{mli@nus.edu.sg}} \and Zhengyan Xu\thanks{Department of Economics, University of Pennsylvania. Email: \texttt{xzy24@sas.upenn.edu}} }

\date{\today}
\maketitle
\begin{abstract}
  \noindent We develop a tractable identification approach for strategic network formation models with both strategic link interdependence and individual unobserved heterogeneity (fixed effects). The key challenge is that endogenous network statistics (e.g.\ number of common friends) enter the link formation equation, while the mapping from model primitives to equilibrium network structure is generally intractable. Our approach sidesteps this difficulty using a ``bounding-by-$c$'' technique that treats endogenous covariates as random variables and exploits monotonicity restrictions to obtain identifying information. A central contribution is to develop a spectrum of fixed-effects handling strategies based on subnetwork configurations: tetrad-based restrictions that \emph{difference out} all individual fixed effects, triad-based and weighted restrictions that combine \emph{difference-out} and \emph{integrate-out} steps by differencing out some fixed effects and profiling over the remainder conditional on observed characteristics, and general weighted cycle-based restrictions that unify these cases. We also provide point identification results. Preliminary simulations show that the approach can deliver informative bounds on the structural parameters.

  \bigskip{}
  \noindent\textbf{Keywords:} network formation, strategic interaction, fixed
  effects, partial identification, endogeneity, tetrad, triad, subnetwork

  \noindent \textbf{JEL Classification:} C31, C57, D85
\end{abstract}
\newpage{}


\section{Introduction}

Network formation is a central problem in network
economics. From the econometric perspective, the goal
is to identify and estimate structural parameters
in a model where agents are heterogeneous and strategically interdependent
when making linking decisions. Substantial progress has been
made on models with either unobserved heterogeneity or strategic interdependence, but
combining both features in a tractable framework has remained an open
problem.

This paper develops a tractable identification approach for strategic
network formation models with unobserved individual fixed effects.
We consider a latent utility framework where the formation of a link
between agents $i$ and $j$ depends on observed dyadic characteristics,
individual fixed effects that capture unobserved degree heterogeneity
(such as sociability or popularity), and idiosyncratic pairwise shocks.
The link formation decision is allowed to depend on endogenous
network statistics arising from the equilibrium of the network
formation game, such as the number of common friends between $i$
and $j$. This accommodates strategic complementarities and
other forms of link interdependence that are relevant in many economic
applications.

The main methodological challenge is that the mapping
from model primitives to equilibrium network structure is generally
intractable. In strategic network formation games, the equilibrium
network depends on the entire profile of agent characteristics, fixed
effects, and shock realizations through complex equilibrium conditions.
This gives rise to at least three difficulties. First, the space of possible
network structures is a discrete space with a combinatorially overwhelming
number of elements\footnote{For a standard illustrative example: there are $2^{435}$ possible
(undirected and unweighted) network structures on a set of 30 agents.}, making it very hard to handle both theoretically and computationally.
Second, solution concepts for network formation problems (such as
pairwise stability) usually do not have uniqueness properties, and
sometimes the number of equilibria can be very large \citep{de2020econometric}. Third, iterative
procedures are in general not guaranteed to converge to an equilibrium
network even when such an equilibrium does exist \citep{jackson2002evolution}. Consequently, characterizing
the equilibrium mapping, let alone inverting it for identification
purposes, is typically infeasible, except in very simple special cases.

This intractability has led the existing literature to either (i)
focus on models without strategic interdependence, such as in the seminal work by \cite{graham2017econometric} and follow-up work by \cite{gao2020} and \citet*{gao2023logical},
or (ii) study strategic models under restrictions that eliminate or
simplify the role of unobserved heterogeneity \citep*{mele2017structural,de2018identifying,sheng2020structural}.

Our approach avoids the need for equilibrium characterization
by exploiting monotonicity restrictions to obtain identifying information
without knowing the form of
the equilibrium mapping. The technique, which we call ``bounding
by $c$,'' treats endogenous covariates as random variables and uses
indicator function arguments to derive bounds on structural parameters
that hold regardless of the equilibrium realization. This
approach builds on and extends ideas from \citet{gao2026identification},
who developed similar techniques for dynamic panel data models under
a partial stationarity condition.

Based on our key technique, we derive a system of identifying restrictions
based on different subnetwork configurations. Our primary restrictions
exploit tetrad configurations, i.e., sets of four agents with a particular
pattern of links, to difference out all individual fixed effects simultaneously.
The resulting bounds depend only on the distribution of idiosyncratic
shocks, which can be aggregated into identifying restrictions
on the model parameters under a standard independence assumption on
the idiosyncratic shocks. We also complement the tetrad restrictions
with two additional types of restrictions: (1) ``incomplete differencing''
restrictions, such as triad-based restrictions, that do not eliminate
all fixed effects but nevertheless provide identifying information,
and (2) ``cyclically differencing'' restrictions that involve longer
cycles of links than those in a tetrad.

Beyond partial identification, we establish conditions under which the structural parameters are point identified. When the pairwise shocks follow a logistic distribution and the endogenous covariates satisfy a comparison-pattern invariance condition---which holds for both the common-friends statistic and the standard Jaccard index---we show that a log-odds ratio of tetrad probabilities, conditioned on diagonal link absences that isolate the endogenous covariates from the tetrad shocks, identifies a linear index of the structural parameters. The fixed effects cancel algebraically through tetrad differencing, while the endogenous covariates decouple through the isolation conditioning; these two mechanisms operate separately. The resulting identification equation leads to a computationally simple conditional logit estimator that accommodates both endogenous network statistics and unobserved individual fixed effects, generalizing the tetrad logit of \citet{graham2017econometric} to settings with strategic link interdependence.

We report a preliminary simulation exercise illustrating the finite-sample performance of the proposed identifying restrictions. We simulate networks under nested specifications that progressively add fixed effects and an endogenous covariate capturing local network structure, and compute the resulting identified set using the tetrad inequalities. The results show that the restrictions can deliver nontrivial bounds on the strategic-interaction parameter even in the full model with endogenous covariates and fixed effects. A larger-scale Monte Carlo study is left to future work.

\subsubsection*{Related Literature}

This paper relates to several strands of the econometric literature
on network formation; see \cite{de2020econometric} for a recent survey. The first strand concerns dyadic models with
unobserved heterogeneity but without strategic interdependence. \citet{graham2017econometric}
introduced the tetrad logit estimator, which differences out additive
fixed effects by comparing link patterns across tetrads of agents.
This approach achieves point identification of homophily parameters
under a conditional logit specification. \citet{gao2020}
extended these ideas to nonparametric identification of the homophily
effect function, while \citet*{gao2023logical} developed logical differencing
techniques for settings with non-transferable utilities where bilateral
consent is required for link formation. Our paper builds on these
ideas but generalizes the setting by allowing
endogenous covariates arising from strategic interaction.

A complementary line of recent work develops general fixed-effects methods applicable to network data. \citet{bonhomme2023functional} extend the functional differencing approach of \citet{bonhomme2012functional} from panel data to network settings, deriving moment restrictions on model parameters that hold regardless of the form of heterogeneity and without requiring dense networks; their framework, however, treats the network as exogenous. \citet*{bonhomme2025moment} characterize all moment conditions for nonlinear panel data models that are robust to both unrestricted feedback and arbitrary heterogeneity, with applications to duration and count models. \citet*{dano2025binary} provide a systematic treatment of conditional likelihood and moment-based identification in binary logit models with general fixed effects, covering both panel and network (dyadic) data; their analysis subsumes the tetrad logit of \citet{graham2017econometric} as a special case. Our paper differs from this body of work in that we address the additional complication of endogenous network covariates arising from strategic interaction, which requires a distinct identification strategy based on tetrad inequalities rather than conditional likelihood or moment equalities.

The second strand studies strategic network formation under various
equilibrium concepts. \citet{mele2017structural} analyzes exponential
random graph models (ERGMs) as potential games with strategic complementarities,
developing simulation-based estimation methods. However, ERGMs are
known to suffer from degeneracy problems and computational challenges
in large networks. \citet*{de2018identifying} and \citet{sheng2020structural}
study partial identification in strategic network formation models
under simultaneity, using subnetwork restrictions to bound the identified
set. \citet{sheng2020structural} provides particularly elegant results
using small subnetwork configurations, but her analysis does not accommodate
agent-level fixed effects. \citet{menzel2026strategic} develops a many-agent asymptotic approximation for pairwise stable networks with anonymous and non-anonymous interaction effects, characterizing the limiting distribution of link intensities through a fixed-point system. His identification of payoff parameters requires parametric specification of the utility function and solution of the equilibrium fixed point, and does not address individual fixed effects of the type we consider here. Our approach addresses the fixed-effects gap by combining
the differencing logic from the dyadic literature with
techniques that accommodate strategic interdependence, while avoiding the need for equilibrium computation.

On the inference side, \citet{leung2019treatment} and \citet{leung2019normal} develop laws of large numbers and central limit theorems for network moments in sparse strategic formation models embedded in a latent position space, using a branching-process subcriticality condition to control strategic dependencies. \citet{menzel2021clt} provides an alternative central limit theorem based on the exchangeability of agents' potential values, which does not require positional homophily or $K$-locality and applies to general $D$-adic network moments. In Section~\ref{sec:primitive}, we provide primitive conditions under which the conditional probabilities underlying our identifying restrictions can be consistently estimated from a single large network, drawing on these frameworks.

To our knowledge, this paper is the first in the econometric literature
to provide identification results for network formation models that
allow for both (i) unobserved individual fixed effects
and (ii) endogenous network statistics arising from strategic link
formation. Our contributions are as follows. First, we show that
tetrad-based configurations can be used to difference out all individual
fixed effects even when endogenous covariates are present, yielding
bounds on structural parameters that are informative under reasonable
support conditions. Second, we develop complementary ``incomplete
differencing'' restrictions that deliver additional identification
power by combining the two traditional ``difference-out'' and ``aggregate-out''
approaches for the handling of fixed effects in the econometric literature.
Third, we demonstrate that the ``bounding by $c$'' technique from
the panel data literature in \citet{gao2026identification} can be productively
adapted to network settings, suggesting broader applicability of the
key idea.

Methodologically, our paper builds upon the recent work on semiparametric
identification with non-separable unobservables. The ``bounding by
$c$'' technique we employ shares conceptual foundations with the
partial stationarity approach of \citet{gao2026identification} for dynamic
panel models, and the ``multi-inequality aggregation'' also relates
to the logical operations underlying \citet*{gao2023logical}. The
commonality across these papers is the use of arithmetic and logical
operations based on monotonicity and inequality conditions. In particular,
the ``bounding by $c$'' technique in \citet{gao2026identification} and
the present paper provides a way to ``eliminate'' the potentially complicated
endogenous variables when establishing identifying restrictions based
on homogeneity assumptions on the structural error
terms.

Finally, our paper relates to the broader literature on games with
incomplete information and heterogeneous agents. \citet{aguirregabiria2007sequential}
and \citet*{bajari2007estimating} develop estimation methods for dynamic
games that avoid full solution of the game by using conditional choice
probability representations. While our setting differs (we study
a static network formation game rather than a dynamic game), the
spirit of avoiding intractable equilibrium computations through the use of observable implications is similar. That said, the exact techniques we exploit to achieve this in our network formation context are naturally different from theirs.

\medskip{}

The remainder of the paper is organized as follows. Section~\ref{sec:model}
presents the model setup, introduces notation, and states our maintained
assumptions. Section~\ref{sec:identification} derives the main identifying
restrictions, beginning with tetrad-based restrictions, then developing
triad-based and weighted alternatives, and culminating in a general
cycle-based differencing framework. Section~\ref{sec:primitive} provides primitive sufficient conditions under which the high-level identifying assumptions hold, embedding the model in the sparse network framework of \citet{leung2019treatment}. Section~\ref{sec:point_id} establishes conditions for point identification under logistic errors and a comparison-pattern invariance condition on the endogenous covariates. Section~\ref{sec:simulation}
provides simulation results based on tetrad-based identifying restrictions.
Section~\ref{sec:discussion} concludes.


\section{Model and Assumptions} \label{sec:model}

\subsection{Model Setup\label{subsec:setup}}

Consider a set of $n$ agents indexed by $i=1,\ldots,n$, and an
unweighted network among them represented by an $n\times n$ adjacency matrix
$Y$, where $Y_{ij}=1$ if a link exists between agents $i$ and $j$,
and $Y_{ij}=0$ otherwise. We focus on undirected networks, i.e., $Y_{ij}=Y_{ji}$
for all $i,j$.  Throughout, we index distinct unordered pairs as
$(i,j)$ with $i<j$; all dyadic quantities ($Y_{ij}$, $Z_{ij}$,
$X_{ij}$, $\varepsilon_{ij}$) are understood to be symmetric in
their indices.

We consider the following network formation model with both link interdependence and fixed effects, where a link between agents $i$ and $j$ exists if and only if the latent
surplus from the link is non-negative:
\begin{equation}
  Y_{ij}=\ind\left\{ Z_{ij}'\beta_{0}+X_{ij}'\gamma_{0}\geq A_{i}+A_{j}+\varepsilon_{ij}\right\} \label{eq:link_formation}
\end{equation}
where:
\begin{itemize}
  \item $Z_{ij}=w_n(Z_{i},Z_{j})$ is a vector of exogenous covariates
    constructed from individual-level exogenous characteristics $Z_{i}$
    and $Z_{j}$ via a known function $w_n$ (e.g., $Z_{ij}=|Z_{i}-Z_{j}|$
    for homophily effects);
  \item $X_{ij}$ is a vector of potentially endogenous covariates, which may involve other links $Y_{hk}$ with $(h,k)\neq(i,j)$: we discuss $X_{ij}$ in more detail below.
  \item $A_{i}$ and $A_{j}$ are individual-level unobserved fixed effects
    capturing heterogeneity in sociability, popularity, or other unobserved
    characteristics (here $A_i$ acts as a threshold, so larger $A_i$ reduces the link probability; equivalently, one may place $-A_i$ on the left-hand side, recovering the usual interpretation where higher sociability promotes link formation);
  \item $\varepsilon_{ij}$ is an idiosyncratic pairwise shock.
\end{itemize}
The covariate function $w_n$ may depend on the network size $n$; this generality is essential for accommodating sparse network asymptotics (Section~\ref{sec:primitive}). Specifically, when agents are embedded in a latent space with positions $\Xi_i$ and the network is sparse with $O(1)$ expected degree, the sparsity scaling $r_n \to 0$ rescales positions to $\bar{\Xi}_i := r_n^{-1}\Xi_i$, and the covariate function takes the form $w_n(Z_i, Z_j) = \tilde{w}(\bar{\Xi}_i, \bar{\Xi}_j, W_i, W_j)$ for a \emph{fixed} function $\tilde{w}$ that does not depend on $n$. The $n$-dependence of $w_n$ thus enters only through the position rescaling, while the structural parameters $\theta_0 = (\beta_0, \gamma_0)$ remain fixed---analogous to the spatial weight matrix $W_n$ in spatial autoregressive models. This structure ensures that the conditional link probability $\Prob(Y_{ij} = 1 \mid Z_{ij} = z)$ is an $O(1)$ quantity for any fixed $z$ in the support of $Z_{ij}$, even though the marginal link probability vanishes under sparsity. In many specifications $w_n$ does not actually depend on $n$, but we allow it for generality.

\begin{remark}[Individual-Level versus Dyadic Rescaling]\label{rem:individual_rescaling}
  In the sparse network framework of \citet{leung2019treatment}, the
  sparsity scaling~$r_n = (\kappa/n)^{1/d_\xi}$ enters through the
  rescaled homophily measure
  $\delta_{ij} = r_n^{-1}\|\Xi_i - \Xi_j\|$, which is a function of
  the \emph{dyadic} distance.  We adopt an equivalent but notationally
  simpler formulation in which the rescaling is absorbed into the
  individual-level characteristics.  Specifically, we define the
  rescaled type
  \[
    \bar Z_i \;:=\; (r_n^{-1}\Xi_i,\; W_i)
  \]
  and write the dyadic covariate as
  $Z_{ij} = \tilde{w}(\bar Z_i, \bar Z_j)$ for a \emph{fixed}
  function~$\tilde{w}$ that does not depend on~$n$.  The
  $n$-dependence of the model is thereby localized in the
  deterministic, individual-level transformation
  $Z_i \mapsto \bar Z_i$, while the structural objects---the
  parameters $\theta_0 = (\beta_0, \gamma_0)$ and the covariate
  function~$\tilde{w}$---remain invariant to the network size.

  The two formulations are mathematically identical: writing
  $w_n(Z_i, Z_j) := \tilde{w}(\bar Z_i, \bar Z_j)$ recovers the
  size-dependent dyadic function used in \citet{leung2019treatment}.
  In particular, because $\bar Z_i$ is a deterministic function
  of~$Z_i$ and $(Z_i, A_i)$ is i.i.d.\ across agents
  (Assumption~\ref{ass:random_sampling}), the augmented type
  $\tilde Z_i := (\bar Z_i, A_i)$ remains i.i.d., and all arguments
  based on exchangeability and the branching-process domination in
  \citet{leung2019treatment} carry over without modification.
  Likewise, the identification results of
  Theorems~\ref{thm:tetrad}--\ref{thm:tetrad_parametric} are
  unaffected, since they are purely algebraic and hold for each
  fixed~$n$ regardless of how the $n$-dependence is parametrized.

  We prefer the individual-level formulation for two reasons.  First,
  it cleanly separates the sparsity mechanism (the rescaling
  $\Xi_i \mapsto r_n^{-1}\Xi_i$, which governs the rate at which
  distant agents become unable to interact) from the identification
  mechanism (the function~$\tilde{w}$ and the conditioning
  on~$\zeta_{ijhk}$, which are $n$-invariant).  Second, it makes
  transparent that the conditional link probability
  $\Prob(Y_{ij}=1 \mid Z_{ij} = z)$ is an $O(1)$ quantity for any
  fixed~$z$ in the support of~$Z_{ij}$: the vanishing marginal link
  probability $\Prob(Y_{ij}=1) = O(n^{-1})$ arises entirely from the
  thinning of the position distribution under rescaling, not from any
  $n$-dependence in the structural model.
\end{remark}

The structural parameters of interest are $\theta_{0}:=\left(\beta_{0},\gamma_{0}\right)\in \Theta$.
Throughout this paper, we use the shorthand notation:
\begin{equation}
  \delta_{ij} :=\delta_{ij}(\theta_0),\quad \delta_{ij}(\theta):=Z_{ij}^{'}\beta+X_{ij}^{'}\gamma,\quad v_{ij}:=A_{i}+A_{j}+\varepsilon_{ij}\label{eq:shorthand}
\end{equation}
so that the link formation equation becomes
\begin{equation}
  Y_{ij}=\ind\{v_{ij}\leq\delta_{ij}\}.
  \label{eq:link_shorthand}
\end{equation}


A distinguishing feature of our model is the presence of the endogenous covariates $X_{ij}$ and the fixed effects $A_i,A_j$. In particular, the endogenous covariates $X_{ij}$ may arise as functions of the realized network. Specifically,
we allow:
\begin{equation}
  X_{ij}=\phi_{ij}(Y,Z)\label{eq:endogenous_X}
\end{equation}
where $\phi_{ij}$ is a known function mapping the realized network
$Y$ and exogenous characteristics $Z=(Z_{1},\ldots,Z_{n})$ to the
dyadic covariate. For example, a component of $X_{ij}$ might be the
number of common friends between $i$ and $j$:
\begin{equation}
  \text{CF}_{ij}:=\sum_{k\neq i,j}Y_{ik}Y_{jk}\label{eq:common_friends}
\end{equation}
which captures transitivity effects in network formation. Other examples
include measures of local clustering, degree statistics, and their rescaled or normalized variants.

Since $X_{ij}$ depends on the entire network $Y$, and $Y$ is determined
by equation~\eqref{eq:link_formation} simultaneously for all pairs,
the model describes a strategic network formation game under transferable utilities with pairwise stability as the solution concept. In general, the game may have multiple equilibria for a given realization of primitives $(Z,A,\varepsilon)$, where $A=(A_{1},\ldots,A_{n})$ and $\varepsilon=(\varepsilon_{ij})_{i<j}$
collect the fixed effects and idiosyncratic shocks respectively. We represent the realized network abstractly as:
\begin{equation}
  Y=g(Z,A,\varepsilon;\theta_{0})\label{eq:equilibrium}
\end{equation}
where $g$ incorporates both the equilibrium correspondence and an (arbitrary and possibly unknown) equilibrium selection mechanism that picks a single equilibrium for each realization of the primitives.

The equilibrium mapping
$g$ is generally intractable to characterize. Even for simple specifications
of $\phi_{ij}$, the fixed-point nature of the equilibrium creates a complex interdependence
structure, which is exacerbated by the discrete combinatorial nature of the graph space. As will be shown below, our identification approach does not require characterization
of $g$: we extract identifying information using monotonicity
restrictions that hold regardless of the complexity of the equilibrium correspondence and any equilibrium selection mechanisms.


\subsection{Assumptions}


We maintain the following assumptions throughout the analysis.

\begin{assumption}[Random Sampling]\label{ass:random_sampling}
  The pair $(Z_{i},A_{i})$ is independently and identically distributed
  across agents $i=1,\ldots,n$.
\end{assumption}

Assumption \ref{ass:random_sampling}  is standard in the network formation literature. Note that arbitrary dependence between $Z_{i}$ and $A_{i}$ within
an agent is allowed, as is arbitrary dependence between $A_i$ and $\varepsilon_{ij}$. The latter flexibility is inessential for the tetrad-based restrictions in Section~\ref{sec:identification}, since the tetrad differencing construction eliminates all fixed effects algebraically, but it avoids imposing unnecessary restrictions on the model.

\begin{assumption}[Exogeneity of $Z_i$]\label{ass:exogeneity} The vector
  of exogenous characteristics $Z=(Z_{1},\ldots,Z_{n})$ is independent
  of the idiosyncratic shocks $\varepsilon=(\varepsilon_{ij})_{i<j}$.
\end{assumption}

Assumption~\ref{ass:exogeneity} is an exogeneity condition
requiring that the observables $Z$ are independent of the unobserved
pairwise shocks. Note that we do \emph{not} assume that $X$ is independent
of $\varepsilon$: since $X$ is a function of the equilibrium network,
it is endogenous and generally correlated with the idiosyncratic shocks.
The key distinction is between the exogenous covariates $Z$ (which
satisfy independence) and the endogenous covariates $X$ (which do
not).

\begin{assumption}[IID Pairwise Errors]\label{ass:iid_errors}
  The idiosyncratic shocks $\varepsilon_{ij}$ are independently and
  identically distributed across all pairs $(i,j)$ with $i<j$.
\end{assumption}

Assumption \ref{ass:iid_errors} is another standard assumption in the network formation literature on the idiosyncrasy of link-level surplus shocks.

\begin{assumption}[Consistent Estimation of Tetrad Conditional Probabilities]\label{ass:tetrad_id}
  Let $ijhk$ denote a tetrad of distinct agents and write $Z_{S}:=(Z_i,Z_j,Z_h,Z_k)$ for their individual-level exogenous characteristics. Let $E_{ijhk}$
  be an event in the observable tetrad $\sigma$-algebra
  \[
    \sigma_{ijhk}:=\sigma\left(Y_{i'j'},X_{i'j'},Z_{i'}:i',j'\in\{i,j,h,k\}\right).
  \]
  Then the conditional probability $\Prob(E_{ijhk}\mid Z_S=z_S)$
  can be consistently estimated from observable data in a single large network.
\end{assumption}

Assumption~\ref{ass:tetrad_id} is a high-level regularity condition
asserting that conditional probabilities of tetrad events can be consistently
estimated from a single large network. The conditioning is on the individual-level characteristics $(Z_i,Z_j,Z_h,Z_k)$, which is consistent with the general subnetwork formulation (Assumption~\ref{ass:W_id} below). Since the link formation equation~\eqref{eq:link_formation} depends on agents $i$ and $j$ only through the dyadic covariate $Z_{ij}=w_n(Z_i,Z_j)$, it is convenient to define the shorthand
\begin{equation}
  \zeta_{ijhk} := (Z_{ij}, Z_{hk}, Z_{ik}, Z_{jh})
  \label{eq:zeta_ijhk_app}
\end{equation}
for the vector of dyadic covariates determined by the tetrad. Because $\zeta_{ijhk}$ is a deterministic function of $Z_S$, any conditional probability $\Prob(E\mid \zeta_{ijhk}=\zeta)$ is well-defined under the finer conditioning of Assumption~\ref{ass:tetrad_id}, and we use the shorthand $\E[\cdot\mid\zeta]$ throughout Section~\ref{sec:identification} for readability.

This assumption is satisfied when the network exhibits sufficiently weak dependence so that a Law of Large Numbers (LLN) applies to tetrad statistics. We view Assumption~\ref{ass:tetrad_id} as a convenient high-level condition for presenting the identification arguments; Section~\ref{sec:primitive} provides primitive sufficient conditions under which it holds for the class of strategic formation games considered here, drawing on the network limit theory of \citet{leung2019treatment}.

\begin{remark}[Population Identification versus Single-Network Estimation]\label{rem:id_vs_est}
  The interplay between identification and estimation in network models differs from the standard cross-sectional setting, and it is useful to clarify the distinction.

  In classical econometrics with i.i.d.\ data, the data-generating process (DGP) is fixed, identification asks whether the structural parameters are pinned down by the population distribution, and estimation recovers them from a growing sample drawn from that same distribution. In our setting, the data consists of a \emph{single} network of $n$ agents, and the game-theoretic structure means that the joint distribution of $(Y, X, Z)$ is inherently $n$-dependent: adding agents changes the strategic environment and hence the equilibrium mapping~$g$ in~\eqref{eq:equilibrium}. When the covariate construction $w_n$ depends on $n$ (as required for sparse network asymptotics; see Section~\ref{sec:primitive}), the observables are defined relative to the current network size.

  Accordingly, it is useful to distinguish two layers. \emph{Population-level identification} is a per-$n$ statement: for a game of fixed size $n$, the bounding-by-$c$ restrictions derived in Section~\ref{sec:identification} hold whenever the conditional probabilities $p_L(\zeta,c;\theta)$ and $p_U(\zeta,c;\theta)$ are well-defined. This layer is purely algebraic---it relies on monotonicity and the exogeneity of $Z$ (Assumption~\ref{ass:exogeneity}), with no reference to asymptotics, sparsity, or limit theory. In particular, the identified set---which we denote $\Theta_I^{(n)}$ to emphasize its dependence on $n$ (see, e.g., $\Theta_I^{\mathrm{tetrad}}$ in Theorem~\ref{thm:tetrad})---is well-defined for each network size $n$, and the true parameter satisfies $\theta_0 \in \Theta_I^{(n)}$ for all $n$, regardless of whether the network is dense or sparse. \emph{Consistent estimation from a single network} is an asymptotic statement: as the network grows ($n \to \infty$), the sample conditional probabilities $\hat{p}(\zeta)$ converge to their population counterparts $p^{(n)}(\zeta)$. This convergence requires a law of large numbers for dependent network statistics, which in turn requires sufficient decay of dependence across the network. Assumption~\ref{ass:tetrad_id} asserts that such convergence holds, and Section~\ref{sec:primitive} provides primitive conditions---in particular, sparsity and subcriticality---under which it can be verified.

  The $n$-dependence of the covariate construction $w_n$ is relevant only at the estimation layer; the identification layer is agnostic about it. Under the regularity conditions of Section~\ref{sec:primitive}, the equilibrium ``localizes'' as $n \to \infty$---the conditional probability of a tetrad event depends only on the $O(1)$-sized strategic neighborhood---so that $p^{(n)}(\zeta)$ stabilizes and the identified sets $\Theta_I^{(n)}$ converge to a well-defined limit.
\end{remark}


\section{Identifying Restrictions}

\label{sec:identification} 

This section develops the main identifying restrictions for the structural
parameters $\theta_{0}=(\beta_{0},\gamma_{0})$. We begin with tetrad-based
restrictions that achieve complete elimination of all individual fixed
effects, then present additional triad-based restrictions as well as a general class of cycle-based ones.


\subsection{Tetrad-Based Restrictions}

\label{subsec:tetrad} 
Our core idea is to combine the ``tetrad differencing'' technique with the ``bounding-by-$c$'' technique together to derive bounds free of potentially complicated endogenous variables and unobserved fixed effects.

Consider a tetrad of distinct agents $ijhk\equiv(i,j,h,k)$. We first explain the tetrad-differencing technique. Specifically, consider the following tetrad event
\begin{equation}\label{eq:tetrad_event}
  Y_{ij}=Y_{hk}=1\ \text{and}\  Y_{ik}=Y_{jh}=0.
\end{equation}
which by \eqref{eq:link_shorthand} is equivalent to the event
\[v_{ij}\leq\delta_{ij},\quad v_{hk}\leq\delta_{hk},\quad v_{ik}>\delta_{ik},\quad v_{jh}>\delta_{jh}.
\]
which implies the following
\begin{equation}\label{eq:tetrad_sum}
  \Delta v:=v_{ij}+v_{hk}-v_{ik}-v_{jh}<\delta_{ij}+\delta_{hk}-\delta_{ik}-\delta_{jh}=:\Delta\delta.
\end{equation}
We observe that the fixed effects are all differenced out in $\Delta v$:
\begin{align}
  \Delta v & =v_{ij}+v_{hk}-v_{ik}-v_{jh}\nonumber\\
  & =(A_{i}+A_{j}+\varepsilon_{ij})+(A_{h}+A_{k}+\varepsilon_{hk})-(A_{i}+A_{k}+\varepsilon_{ik})-(A_{j}+A_{h}+\varepsilon_{jh})\nonumber\\
  & =\varepsilon_{ij}+\varepsilon_{hk}-\varepsilon_{ik}-\varepsilon_{jh}\nonumber\\
  & =:\Delta\varepsilon.
  \label{eq:delta_v}
\end{align}
Since $\Delta v=\Delta\varepsilon$ by~\eqref{eq:delta_v}, the inequality~\eqref{eq:tetrad_sum} is equivalent to
\begin{equation}\label{eq:tetrad_sum_simplified}
  \Delta\varepsilon<\Delta\delta.
\end{equation}

Now, we combine the above with the ``bounding-by-$c$'' technique. Specifically, consider the intersection of the tetrad event \eqref{eq:tetrad_event} with the event $\{\Delta\delta\leq c\}$, i.e.,
\begin{equation}\label{eq:tetrad_event_c}
  Y_{ij}=Y_{hk}=1\ \text{ and }\  Y_{ik}=Y_{jh}=0\  \text{ and }\ \Delta\delta\leq c,
\end{equation}
which by \eqref{eq:tetrad_sum_simplified} implies
\[
  \Delta\varepsilon<\Delta\delta \text{ and }\ \Delta\delta\leq c.
\]
The above further implies
\begin{equation}\label{eq:bound_U_event}
  \Delta\varepsilon\leq c,
\end{equation}
an inequality on the exogenous $\Delta\varepsilon$ without fixed effects or endogeneity issues.

Let $F_{\Delta}$ denote the CDF of $\Delta\varepsilon$. Since the intersection event~\eqref{eq:tetrad_event_c} implies~\eqref{eq:bound_U_event}, taking conditional probabilities given $\zeta_{ijhk}=\zeta$ yields
\begin{equation}\label{eq:bound_L}
  \E\left[Y_{ij}Y_{hk}(1-Y_{ik})(1-Y_{jh})\ind\{\Delta\delta\leq c\}\,\middle|\,\zeta\right]\ \leq\ \Prob\left(\Delta\varepsilon\leq c\right)\equiv F_{\Delta}(c)
\end{equation}
by the independence of $\varepsilon$ and $Z$ in Assumption \ref{ass:exogeneity}.\footnote{Although $\Delta\delta$ involves the endogenous covariates $X_{ij}$ and hence is not $\sigma(Z)$-measurable, the inequality \eqref{eq:bound_L} is valid because the bound $\ind\{\text{tetrad event}\}\cdot\ind\{\Delta\delta\leq c\}\leq\ind\{\Delta\varepsilon\leq c\}$ holds \emph{pointwise} (i.e., for every realization of $(Y,X,Z,A,\varepsilon)$), and taking conditional expectations of both sides preserves the inequality. The right-hand side $\Prob(\Delta\varepsilon\leq c\mid \zeta_{ijhk}=\zeta)=F_\Delta(c)$ follows from $\varepsilon\perp Z$, since $\zeta_{ijhk}$ is a measurable function of $(Z_1,\ldots,Z_n)$.}

Similarly, we can obtain another bound by considering the ``flipped'' event
\[
  Y_{ij}=Y_{hk}=0\ \text{  and  }\  Y_{ik}=Y_{jh}=1\ \text{  and  } \ \Delta\delta>c,
\]
which implies $\Delta\varepsilon>\Delta\delta>c$ and hence
\[(1-Y_{ij})(1-Y_{hk})Y_{ik}Y_{jh}\ind\{\Delta\delta>c\}\leq\ind\{\Delta\varepsilon>c\},\]
yielding
\begin{equation}
  \E\left[(1-Y_{ij})(1-Y_{hk})Y_{ik}Y_{jh}\ind\{\Delta\delta> c\}\,\middle|\,\zeta\right]\ \leq\  1-F_{\Delta}(c),
\end{equation}
or equivalently,
\begin{equation}\label{eq:bound_U}
  F_{\Delta}(c) \leq 1- \E\left[(1-Y_{ij})(1-Y_{hk})Y_{ik}Y_{jh}\ind\{\Delta\delta> c\}\,\middle|\,\zeta\right].
\end{equation}

To state our main results, it is helpful to make explicit the dependence of the index difference on the model parameters and the tetrad $ijhk$. In the following we write the tetrad index difference
\begin{equation}
  \Delta_{ijhk}(\theta):=\delta_{ij}(\theta)+\delta_{hk}(\theta)-\delta_{ik}(\theta)-\delta_{jh}(\theta)
\end{equation}
and two parametrized conditional probabilities (expectations)
\begin{align}\label{eq:pL_pU}
  p_{L}(\zeta,c;\theta) &:=\E\left[Y_{ij}Y_{hk}(1-Y_{ik})(1-Y_{jh})\ind\{\Delta_{ijhk}(\theta)\leq c\}\,\middle|\,\zeta\right],\\
  p_{U}(\zeta,c;\theta) &:=1-\E\left[(1-Y_{ij})(1-Y_{hk})Y_{ik}Y_{jh}\ind\{\Delta_{ijhk}(\theta)> c\}\,\middle|\,\zeta\right].
\end{align}

We now exploit the above to build identifying restrictions, which have two variants depending whether researchers impose an additional parametric assumption on $\varepsilon$.

We first explain the case where no parametric assumption is made on $\varepsilon$ and $F_\Delta$ is left unknown. In this case, \eqref{eq:bound_L} and \eqref{eq:bound_U} are not identifying restrictions per se since they involve the unknown $F_\Delta(c)$. However, \eqref{eq:bound_L} and \eqref{eq:bound_U} together imply
\[
  p_{L}(\zeta,c;\theta_0)\ \leq F_\Delta(c)\  \leq p_{U}(\zeta,c;\theta_0),
\]
and then, since the middle term is constant across $\zeta$,
\[
  \sup_{\zeta}p_{L}(\zeta,c;\theta_0)\ \leq F_\Delta(c)\  \leq \inf_{\zeta}p_{U}(\zeta,c;\theta_0),
\]
which becomes an identifying restriction once we drop the unknown middle term  $F_\Delta(c)$. We summarize our result in the following theorem.

\begin{theorem}[Tetrad Restrictions with Nonparametric $F_\Delta$]\label{thm:tetrad}
  Under Assumptions~\ref{ass:random_sampling}--\ref{ass:tetrad_id}, the true parameter $\theta_{0}$ belongs to the identified set
  \begin{equation}
    \Theta^{\mathrm{tetrad}}_{I}:=\Big\{\theta:\ \sup_{\zeta}p_{L}(\zeta,c;\theta)\leq\inf_{\zeta}p_{U}(\zeta,c;\theta)\ \text{for all }c\in\R\Big\},
    \label{eq:tetrad_main}
  \end{equation}
  where the supremum and infimum are taken over $\zeta$ in the support of $\zeta_{ijhk}$.

  Equivalently, define
  \begin{equation}
    Q^{\mathrm{tetrad}}(\theta):=\sup_{c\in\R}\left\{\sup_{\zeta}p_{L}(\zeta,c;\theta)-\inf_{\zeta}p_{U}(\zeta,c;\theta)\right\}.
    \label{eq:tetrad_criterion}
  \end{equation}
  Then $\Theta^{\mathrm{tetrad}}_{I}=\{\theta:Q^{\mathrm{tetrad}}(\theta)\leq0\}$.
\end{theorem}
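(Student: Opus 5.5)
The plan is to verify, at $\theta=\theta_0$ and for each fixed $c\in\R$, the sandwich $p_L(\zeta,c;\theta_0)\le F_\Delta(c)\le p_U(\zeta,c;\theta_0)$ for every $\zeta$ in the support of $\zeta_{ijhk}$. We then take the supremum and infimum over $\zeta$, and finally rewrite the resulting set as a sublevel set of $Q^{\mathrm{tetrad}}$.

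\textbf{Step 1 (pointwise indicator bounds).} For every realization of $(Y,X,Z,A,\varepsilon)$ generated by \eqref{eq:link_formation} and \eqref{eq:equilibrium} under any equilibrium selection, we establish
\[
Y_{ij}Y_{hk}(1-Y_{ik})(1-Y_{jh})\ind\{\Delta_{ijhk}(\theta_0)\le c\}\le \ind\{\Delta\varepsilon\le c\},
\]
and the analogous flipped inequality with $\ind\{\Delta_{ijhk}(\theta_0)>c\}$ on the left and $\ind\{\Delta\varepsilon>c\}$ on the right. Whenever the left side equals one, \eqref{eq:link_shorthand} gives $v_{ij}\le\delta_{ij}$, $v_{hk}\le\delta_{hk}$, $v_{ik}>\delta_{ik}$ and $v_{jh}>\delta_{jh}$. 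Summing these gives \eqref{eq:tetrad_sum}. The cancellation \eqref{eq:delta_v} then gives $\Delta\varepsilon<\Delta_{ijhk}(\theta_0)\le c$. The flipped case is symmetric.

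Only the realized values of $X$ enter, so nothing about the form of $g$ or about equilibrium selection is needed. This is the one point I would state carefully: the bound is a deterministic implication that holds realization by realization.

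\textbf{Step 2 (conditioning).} We take conditional expectations given $\zeta_{ijhk}=\zeta$; conditioning on the finer $Z_S$ works equally well. Monotonicity of conditional expectation gives $p_L(\zeta,c;\theta_0)\le \Prob(\Delta\varepsilon\le c\mid\zeta)$.

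Since $\zeta_{ijhk}$ is a measurable function of $Z$, Assumption~\ref{ass:exogeneity} implies $\Prob(\Delta\varepsilon\le c\mid\zeta)=F_\Delta(c)$. By Assumption~\ref{ass:iid_errors}, $F_\Delta$ is the CDF of a fixed combination of four i.i.d.\ draws. It is therefore the same function for every tetrad and every $\zeta$, which is what makes the middle term constant across $\zeta$. The flipped bound similarly gives $1-p_U(\zeta,c;\theta_0)\le 1-F_\Delta(c)$.

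There is one technical subtlety: conditional expectations are defined only up to null sets. I would restrict $\zeta$ to the support and interpret the inequalities for a regular version of the conditional distribution, so that $\sup$ and $\inf$ over the support are meaningful. This is the main (mild) obstacle; I would handle it by an essential supremum and essential infimum, or by assuming continuity in $\zeta$.

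\textbf{Step 3 (conclusion).} Since $F_\Delta(c)$ does not depend on $\zeta$, we have $\sup_\zeta p_L\le F_\Delta(c)\le\inf_\zeta p_U$ for every $c$. Hence $\theta_0\in\Theta^{\mathrm{tetrad}}_I$.

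Assumption~\ref{ass:tetrad_id} ensures that $p_L$ and $p_U$ are functionals of observables for each candidate $\theta$. This makes $\Theta_I^{\mathrm{tetrad}}$ a genuine identified set; Assumption~\ref{ass:random_sampling} guarantees that these probabilities do not depend on the labels $ijhk$.

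For the equivalence, $Q^{\mathrm{tetrad}}(\theta)\le0$ holds if and only if $\sup_\zeta p_L(\zeta,c;\theta)-\inf_\zeta p_U(\zeta,c;\theta)\le0$ for all $c$, because a supremum over $c$ is nonpositive exactly when every term is. This is precisely the defining condition in \eqref{eq:tetrad_main}.
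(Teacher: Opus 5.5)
Your proposal is correct and follows the paper's argument: you derive the realization-by-realization indicator bounds from the tetrad and flipped events (the $A$'s cancel and only the realized $X$ enters), condition on $\zeta$ using $\varepsilon\perp Z$ to get the $\zeta$-free middle term $F_\Delta(c)$, take $\sup_\zeta$ and $\inf_\zeta$, and observe that $Q^{\mathrm{tetrad}}(\theta)\le 0$ is the same condition required for every $c$. Your essential-sup caveat is a reasonable refinement that the paper handles only informally. Your side remark that Assumption~\ref{ass:random_sampling} alone makes $p_L,p_U$ label-invariant is not needed, since the bound holds tetrad by tetrad against a common $F_\Delta(c)$, and that claim would in any case also require symmetric equilibrium selection.
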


\begin{remark}[Outer Region]\label{rem:outer_region}
  The set $\Theta_I^{\mathrm{tetrad}}$ is an outer region for the true parameter: it contains~$\theta_0$ but may also contain parameters~$\theta$ for which no i.i.d.\ pairwise shock distribution generates the observed tetrad probabilities.  This arises because the construction eliminates $F_\Delta(c)$ by intersecting bounds across~$\zeta$, without enforcing the constraint that $F_\Delta$ must be the distribution of $\varepsilon_{ij}+\varepsilon_{hk}-\varepsilon_{ik}-\varepsilon_{jh}$ for some i.i.d.\ pairwise law.  Incorporating this convolution structure could tighten the identified set, but we do not pursue this here.  Under the parametric logistic specification (Theorem~\ref{thm:tetrad_parametric}), the distribution $F_\Delta$ is fully determined, so the parametric identified set does not suffer from this relaxation.
\end{remark}

Next, we consider the case where researchers impose an additional parametric assumption on $\varepsilon$, so that $F_\Delta(c;\theta_\varepsilon)$ is unknown up to a finite-dimensional parameter $\theta_\varepsilon$. Practically, if $\varepsilon$ is assumed to follow a distribution with location-scale parameters such as the logistic distribution \citep{graham2017econometric} and the normal distribution \citep{dzemski2019empirical}, then there is no need to explicitly introduce the parameter $\theta_\varepsilon$ given the scope of location and scale normalization in model \eqref{eq:link_formation}. That said, we keep the notation $\theta_\varepsilon$ for theoretical completeness.

With $F_\Delta(c;\theta_\varepsilon)$ known up to $\theta_\varepsilon$, the inequalities \eqref{eq:bound_L} and \eqref{eq:bound_U} become identifying restrictions themselves, which we summarize in the following theorem.

\begin{theorem}[Tetrad Restrictions with Parametric $F_{\Delta}$]\label{thm:tetrad_parametric}
  Suppose Assumptions~\ref{ass:random_sampling}--\ref{ass:tetrad_id} hold and
  that $F_{\Delta}$ belongs to a known parametric family indexed by
  $\theta_\varepsilon\in\Theta_\varepsilon$, with CDF $F_{\Delta}(\,\cdot\,;\theta_\varepsilon)$.  Write $\overline{\theta}:=(\theta,\theta_\varepsilon)$ for the stacked parameter and $\overline{\theta}_{0}:=(\theta_{0},\theta_{\varepsilon,0})$ for its true value. Then $\overline{\theta}_{0}$ belongs to the identified set
  \begin{equation}
    \overline{\Theta}_{I}^{\mathrm{tetrad}}:=\Big\{\overline{\theta}:\ \sup_{\zeta}p_{L}(\zeta,c;\theta)\leq F_{\Delta}(c;\theta_\varepsilon)\leq\inf_{\zeta}p_{U}(\zeta,c;\theta)\ \text{for all }c\in\R\Big\}.
    \label{eq:tetrad_parametric_main}
  \end{equation}
  Equivalently, define the criterion
  \begin{equation}
    Q^{\mathrm{tetrad}}(\overline{\theta}):= \sup_{c\in\R}\ \max\left\{\sup_{\zeta}p_{L}(\zeta,c;\theta)-F_{\Delta}(c;\theta_\varepsilon),\ F_{\Delta}(c;\theta_\varepsilon)-\inf_{\zeta}p_{U}(\zeta,c;\theta)\right\}.
    \label{eq:tetrad_parametric_criterion}
  \end{equation}
  Then  $\overline{\Theta}^{\mathrm{tetrad}}_{I}=\left\{\overline{\theta}:\ Q^{\mathrm{tetrad}}(\overline{\theta})\leq0\right\}$.
\end{theorem}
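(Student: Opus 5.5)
The proof is essentially a restatement of the two pointwise bounds derived before Theorem~\ref{thm:tetrad}, now evaluated at the true stacked parameter $\overline{\theta}_0$. The plan has three steps: establish the two bounds at $\theta_0$ for a fixed $c$ and $\zeta$, pass to the supremum and infimum over $\zeta$, and verify that the set and criterion descriptions coincide.

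\textbf{Step 1: pointwise bounds at $\theta_0$.}
Fix $c\in\R$ and $\zeta$ in the support of $\zeta_{ijhk}$. By \eqref{eq:link_shorthand}, the tetrad event \eqref{eq:tetrad_event} implies $\Delta\varepsilon<\Delta_{ijhk}(\theta_0)$. Here we use $\Delta v=\Delta\varepsilon$ from \eqref{eq:delta_v}. Consequently, for every realization of $(Y,X,Z,A,\varepsilon)$,
\[
Y_{ij}Y_{hk}(1-Y_{ik})(1-Y_{jh})\ind\{\Delta_{ijhk}(\theta_0)\le c\}\le \ind\{\Delta\varepsilon\le c\}.
\]
Symmetrically, the flipped event gives
\[
(1-Y_{ij})(1-Y_{hk})Y_{ik}Y_{jh}\ind\{\Delta_{ijhk}(\theta_0)>c\}\le \ind\{\Delta\varepsilon>c\}.
\]
These inequalities hold whatever equilibrium selection $g$ in \eqref{eq:equilibrium} generates $Y$.

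Next, take conditional expectations given $\zeta_{ijhk}=\zeta$. Since $\zeta_{ijhk}$ is a measurable function of $Z$ and $\varepsilon\perp Z$ (Assumption~\ref{ass:exogeneity}), the right-hand sides become $F_\Delta(c)$ and $1-F_\Delta(c)$. Assumption~\ref{ass:iid_errors} guarantees that the law of $\Delta\varepsilon$ is the same across tetrads, so $F_\Delta$ is a single CDF. Under the maintained parametric hypothesis this CDF equals $F_\Delta(\cdot;\theta_{\varepsilon,0})$. We therefore obtain
\[
p_L(\zeta,c;\theta_0)\le F_\Delta(c;\theta_{\varepsilon,0})\le p_U(\zeta,c;\theta_0).
\]

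\textbf{Step 2: supremum and infimum over $\zeta$.}
The middle term in the last display does not depend on $\zeta$. Taking the supremum over $\zeta$ on the left and the infimum on the right preserves both inequalities. Since $c$ was arbitrary, this shows $\overline{\theta}_0\in\overline{\Theta}_I^{\mathrm{tetrad}}$. Assumption~\ref{ass:tetrad_id} enters only to ensure that $p_L$ and $p_U$ are features of the observable distribution, so that the set is indeed identified.

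\textbf{Step 3: equivalence with the criterion.}
A pair $(a,b)$ satisfies $a\le 0$ and $b\le0$ if and only if $\max\{a,b\}\le0$. Moreover, $\max\{a(c),b(c)\}\le0$ holds for all $c$ if and only if $\sup_c\max\{a(c),b(c)\}\le 0$; this includes the case where the supremum equals $-\infty$, which cannot occur here because the quantities are bounded. Applying this with
\[
a(c)=\sup_\zeta p_L(\zeta,c;\theta)-F_\Delta(c;\theta_\varepsilon),\qquad b(c)=F_\Delta(c;\theta_\varepsilon)-\inf_\zeta p_U(\zeta,c;\theta)
\]
gives $\overline{\Theta}^{\mathrm{tetrad}}_I=\{\overline{\theta}:Q^{\mathrm{tetrad}}(\overline{\theta})\le0\}$.

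\textbf{Main obstacle.}
The only delicate point is the conditional-expectation step in Step 1. The indicator $\ind\{\Delta_{ijhk}(\theta_0)\le c\}$ depends on the endogenous $X$, so it is not $\sigma(Z)$-measurable. Validity comes entirely from the pointwise dominance combined with monotonicity of conditional expectation, exactly as noted in the footnote to \eqref{eq:bound_L}. No independence between $X$ and $\varepsilon$ is needed.
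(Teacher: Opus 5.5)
Your proposal is correct and follows the same route as the paper. The paper's argument consists of the pointwise indicator dominance from the tetrad and flipped events, conditional expectations given $\zeta$ using $\varepsilon\perp Z$, and then $\sup_\zeta$ and $\inf_\zeta$ taken around the $\zeta$-invariant $F_\Delta(c;\theta_{\varepsilon,0})$. Your Step~3 check via $\max\{a,b\}\le 0$ simply fills in the set/criterion equivalence that the paper leaves implicit.
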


Several features of Theorems \ref{thm:tetrad} and \ref{thm:tetrad_parametric} are worth noting.
First, the tetrad construction eliminates all individual fixed effects by
purely algebraic differencing within a four-agent configuration, so
identification does not rely on sufficient-statistic arguments (as in
parametric logit) nor on restrictions such as $A_{i}\perp Z_{i}$.
Second, the ``bounding by $c$'' step addresses the endogeneity of equilibrium
network statistics by treating the endogenous index components as random
variables and conditioning on observable events that imply inequalities in
$\Delta\varepsilon$; in particular, we never need to model, estimate, or
simulate the conditional distribution of $X_{ij}$.
Third, because the restrictions are derived from monotonicity and equilibrium
feasibility alone, they remain valid without computing the equilibrium mapping
$g$ and are robust to equilibrium multiplicity and unknown equilibrium
selection.
We note that the $n$-dependence of the covariate construction $w_n$ (introduced in Section~\ref{sec:model}) plays no role in the identification arguments above: the bounding-by-$c$ restrictions hold for each fixed network size $n$ and are purely algebraic. The $n$-dependence becomes relevant only at the estimation stage, where one must verify that the conditional probabilities $p_L$ and $p_U$ can be consistently estimated; see Remark~\ref{rem:id_vs_est} for a detailed discussion.


\subsection{Triad-Based Restrictions}

\label{subsec:triad} 

While tetrad restrictions achieve complete elimination of fixed effects,
they are not the only source of identifying restrictions. We now develop
complementary restrictions based on triads---subnetwork configurations involving three agents. Triad-based restrictions do not fully eliminate all fixed effects, but they have two practical advantages. First, triads are combinatorially more abundant than tetrads ($\binom{n}{3}$ versus $\binom{n}{4}$ configurations), yielding more observations per network. Second, they exploit different aspects of the latent variable structure and can tighten the identified set when combined with tetrad restrictions.

In each triad configuration, the differencing construction eliminates some agents' fixed effects but retains others.  The agents whose fixed effects survive---i.e., those with nonzero incidence load $\sigma_i\neq 0$ in the notation of Section~\ref{subsec:cycle}---are the \emph{retained} agents.  The residual latent composite depends on agent heterogeneity only through the retained agents, and by the i.i.d.\ structure of $(Z_i,A_i)$ (Assumption~\ref{ass:random_sampling}), its conditional law given the retained agents' exogenous characteristics does not depend on the characteristics of the differenced-out agents.  This yields bounds that are sharper than unconditional versions, because they avoid averaging over the heterogeneity in $A_i\mid Z_i$ for the retained agents. We consider several variants below.

\subsubsection*{Three-Link Triad Restrictions}

Consider a triad $(i,j,k)$ and the configuration where agent $i$ is linked
to both $j$ and $k$, but $j$ and $k$ are not linked.
Given the event $\{Y_{ij}=Y_{ik}=1,\,Y_{jk}=0\}$, the link inequalities imply
\begin{equation}
  v_{ij}+v_{ik}-v_{jk}<\delta_{ij}+\delta_{ik}-\delta_{jk}.
  \label{eq:triad_three_link_basic_ineq}
\end{equation}
The left-hand side equals
\begin{equation}
  u_{i,jk}:=v_{ij}+v_{ik}-v_{jk}=2A_{i}+\varepsilon_{ij}+\varepsilon_{ik}-\varepsilon_{jk},
  \label{eq:u_three_link}
\end{equation}
so intersecting with $\{\delta_{ij}+\delta_{ik}-\delta_{jk}\le c\}$ yields the indicator
bound
\[
  Y_{ij}Y_{ik}(1-Y_{jk}) \ind\{\delta_{ij}+\delta_{ik}-\delta_{jk}\le c\}\le\ind\{u_{i,jk}\le c\}.
\]
Unlike the tetrad case, the fixed effect $A_i$ does not cancel: the incidence loads are $\sigma_i=+2$, $\sigma_j=\sigma_k=0$, so agent~$i$ is the sole \emph{retained} agent whose fixed effect survives. The residual $u_{i,jk}=2A_i+\varepsilon_{ij}+\varepsilon_{ik}-\varepsilon_{jk}$ depends on $(A_i,\varepsilon_{ij},\varepsilon_{ik},\varepsilon_{jk})$, which by Assumptions~\ref{ass:random_sampling}--\ref{ass:iid_errors} are jointly independent of $(Z_j,Z_k)$ conditional on~$Z_i$. Consequently, the conditional distribution of $u_{i,jk}$ given $Z_i=z_i$ does not depend on $(Z_j,Z_k)$. Let
\begin{equation}\label{eq:G3_conditional}
  G_3(c\mid z_i):=\Prob(u_{i,jk}\le c\mid Z_i=z_i)
\end{equation}
denote this conditional CDF. Taking expectations conditional on $(Z_i=z_i,Z_j=z_j,Z_k=z_k)$ and then taking suprema over
$(z_j,z_k)$ for each fixed $z_i$ yields the moment-inequality bounds stated below.

\begin{proposition}[Three-Link Triad Bounds]\label{prop:three_link}
  For any $c\in\R$ and any $z_i$ in the support of $Z_i$:
  \begin{equation}
    \sup_{z_{j},z_{k}}\E\left[Y_{ij}Y_{ik}(1-Y_{jk})\ind\{\delta_{ij}+\delta_{ik}-\delta_{jk}\leq c\}\,\middle|\,z_i,z_j,z_k\right]\leq G_3(c\mid z_i)\label{eq:three_link}
  \end{equation}
  \begin{equation}
    G_3(c\mid z_i)\leq1-\sup_{z_{j},z_{k}}\E\left[(1-Y_{ij})(1-Y_{ik})Y_{jk}\ind\{\delta_{ij}+\delta_{ik}-\delta_{jk}>c\}\,\middle|\,z_i,z_j,z_k\right]\label{eq:three_link_lower}
  \end{equation}
  where the suprema are over $(z_j,z_k)$ in the support of $(Z_j,Z_k)$. Combining the two inequalities eliminates the unknown $G_3(c\mid z_i)$ for each fixed $z_i$ and~$c$.
\end{proposition}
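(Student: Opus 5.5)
The plan mirrors the tetrad argument: first establish two pointwise indicator inequalities, then take conditional expectations, and finally use the conditional-independence structure so the right-hand side depends only on $z_i$.

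\emph{Upper bound.} On the event $\{Y_{ij}=Y_{ik}=1,\,Y_{jk}=0\}$, the link equation \eqref{eq:link_shorthand} gives $v_{ij}\le\delta_{ij}$, $v_{ik}\le\delta_{ik}$ and $v_{jk}>\delta_{jk}$. Summing these yields \eqref{eq:triad_three_link_basic_ineq}, namely $u_{i,jk}<\delta_{ij}+\delta_{ik}-\delta_{jk}$. Intersecting with $\{\delta_{ij}+\delta_{ik}-\delta_{jk}\le c\}$ then gives, for every realization of $(Y,X,Z,A,\varepsilon)$,
\[
Y_{ij}Y_{ik}(1-Y_{jk})\ind\{\delta_{ij}+\delta_{ik}-\delta_{jk}\le c\}\le \ind\{u_{i,jk}\le c\}.
\]

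\emph{Lower bound.} On the flipped event $\{Y_{ij}=Y_{ik}=0,\,Y_{jk}=1\}$, the reversed inequalities give $u_{i,jk}>\delta_{ij}+\delta_{ik}-\delta_{jk}$. Intersecting with $\{\delta_{ij}+\delta_{ik}-\delta_{jk}>c\}$ gives
\[
(1-Y_{ij})(1-Y_{ik})Y_{jk}\ind\{\delta_{ij}+\delta_{ik}-\delta_{jk}>c\}\le\ind\{u_{i,jk}>c\}.
\]

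\emph{Taking expectations.} I would condition both displays on $(Z_i,Z_j,Z_k)=(z_i,z_j,z_k)$; monotonicity of conditional expectation preserves the pointwise bounds. As in the footnote to \eqref{eq:bound_L}, the endogeneity of $X$ is irrelevant here, since the inequalities hold realization by realization.

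The key step, and the main obstacle, is showing that
\[
\Prob(u_{i,jk}\le c\mid Z_i=z_i,Z_j=z_j,Z_k=z_k)=G_3(c\mid z_i).
\]
Here $u_{i,jk}$ is a function of $(A_i,\varepsilon_{ij},\varepsilon_{ik},\varepsilon_{jk})$. By Assumption~\ref{ass:random_sampling}, $(Z_j,A_j)$ and $(Z_k,A_k)$ are independent of $(Z_i,A_i)$. I would then need $(\varepsilon_{ij},\varepsilon_{ik},\varepsilon_{jk})$ to be independent of $(Z_j,Z_k)$ jointly with $A_i$, conditional on $Z_i$. Assumption~\ref{ass:exogeneity} delivers $\varepsilon\perp Z$, but the paper explicitly allows $A$ and $\varepsilon$ to be dependent. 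I would therefore invoke the joint independence the text asserts just before the proposition (Assumptions~\ref{ass:random_sampling}--\ref{ass:iid_errors}), reading it as the requirement that $\varepsilon$ be independent of $Z$ given $A$, or, more cleanly, that $(A_i,\varepsilon_{ij},\varepsilon_{ik},\varepsilon_{jk})\perp(Z_j,Z_k)\mid Z_i$. I would state this reading explicitly, since this is precisely the point where the argument needs more than the tetrad case did.

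\emph{Conclusion.} Once the conditional law of $u_{i,jk}$ is shown to be free of $(z_j,z_k)$, both conditional bounds hold for every $(z_j,z_k)$ in the support. Taking the supremum over $(z_j,z_k)$ yields \eqref{eq:three_link} and, after rearranging the complement, \eqref{eq:three_link_lower}. Chaining the two inequalities eliminates $G_3(c\mid z_i)$ for each fixed $z_i$ and $c$.
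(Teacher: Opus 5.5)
Your argument is the paper's own: pointwise indicator bounds on the triad and flipped events, conditional expectations given $(Z_i,Z_j,Z_k)$, the claim that the conditional law of $u_{i,jk}$ is free of $(z_j,z_k)$, then suprema over $(z_j,z_k)$. Your concern about the middle step is well founded. The paper simply asserts that $(A_i,\varepsilon_{ij},\varepsilon_{ik},\varepsilon_{jk})\perp(Z_j,Z_k)\mid Z_i$ follows from Assumptions~\ref{ass:random_sampling}--\ref{ass:iid_errors}. But those assumptions explicitly permit arbitrary dependence between $A$ and $\varepsilon$, and under that permission the claim does not follow.

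Be careful, though, with the first repair you offer: requiring $\varepsilon\perp Z\mid A$ is not enough. Here is a counterexample.
\begin{itemize}
\item Let $(Z_a,B_a)$ be independent fair bits, i.i.d.\ across agents, and set $A_a:=Z_a+2B_a$.
\item Let $\varepsilon_{ij}:=B_i\oplus Z_j$ for one pair $(i,j)$, and let all other shocks be fresh independent fair bits.
\item Given any $Z$, the shocks are i.i.d.\ fair bits, so Assumptions~\ref{ass:random_sampling}--\ref{ass:iid_errors} hold.
\item $\varepsilon\perp Z\mid A$ holds trivially, because $Z$ is a function of $A$.
\end{itemize}
Yet $(A_i,\varepsilon_{ij})$ reveals $Z_j$. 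A direct computation gives $\Prob(u_{i,jk}\le 1\mid Z_i=0,Z_j=z_j,Z_k)=1/2$ for $z_j=0$ and $3/8$ for $z_j=1$. So the per-$(z_j,z_k)$ bounds no longer share the common right-hand side $G_3(c\mid z_i)$, and the supremum step fails.

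Your second reading is exactly what is needed. A clean primitive condition delivering it is $\varepsilon\perp(Z,A)$, which the paper itself imposes in Theorem~\ref{thm:point_id}; it suffices that $(\varepsilon_{ij},\varepsilon_{ik},\varepsilon_{jk})\perp(A_i,Z_i,Z_j,Z_k)$. Combine this with $A_i\perp(Z_j,Z_k)\mid Z_i$, which follows from Assumption~\ref{ass:random_sampling}. Then the conditional law of $(A_i,\varepsilon_{ij},\varepsilon_{ik},\varepsilon_{jk})$ given $(Z_i,Z_j,Z_k)$ is the product of $\Prob(A_i\in\cdot\mid Z_i)$ and the fixed law of the three shocks. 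The rest of your proof then goes through unchanged.
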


One may then incorporate the information in Proposition~\ref{prop:three_link} into the characterization of the identified set by arguments analogous to those in Theorems~\ref{thm:tetrad} and~\ref{thm:tetrad_parametric}.

\subsubsection*{Two-Link Triad Restrictions}

A complementary restriction compares two links originating from a common agent.
Specifically, fix $i$ and consider the event that $i$ links to $j$ but not to $k$. Given the observable event $\{Y_{ij}=1,\,Y_{ik}=0\}$, the link inequalities imply
\[
  v_{ij}-v_{ik}<\delta_{ij}-\delta_{ik}.
\]
Since $v_{ij}-v_{ik}=(A_{i}+A_{j}+\varepsilon_{ij})-(A_{i}+A_{k}+\varepsilon_{ik})$, the fixed effect
$A_i$ cancels (it has incidence load $\sigma_i=0$), leaving
\begin{equation}
  u_{i,j,k}:=v_{ij}-v_{ik}=A_{j}+\varepsilon_{ij}-A_{k}-\varepsilon_{ik}.
  \label{eq:u_two_link}
\end{equation}
Using the ``bounding-by-$c$'' technique and intersecting the above with $\{\delta_{ij}-\delta_{ik}\le c\}$, we obtain
\[
  Y_{ij}(1-Y_{ik}) \ind\{\delta_{ij}-\delta_{ik}\le c\}\le\ind\{u_{i,j,k}\le c\}.
\]
Here agents $j$ and $k$ are the retained agents ($\sigma_j=+1$, $\sigma_k=-1$), while agent~$i$ is differenced out ($\sigma_i=0$). The residual $u_{i,j,k}=A_j+\varepsilon_{ij}-A_k-\varepsilon_{ik}$ depends on $(A_j,A_k,\varepsilon_{ij},\varepsilon_{ik})$, which is independent of~$Z_i$ but may depend on $(Z_j,Z_k)$ through the correlation of $A_j$ with $Z_j$ and $A_k$ with $Z_k$. Let
\begin{equation}\label{eq:G2_conditional}
  G_2(c\mid z_j,z_k):=\Prob(u_{i,j,k}\le c\mid Z_j=z_j,Z_k=z_k)
\end{equation}
denote the conditional CDF, which does not depend on~$Z_i$.
Replicating the ``flipped'' argument, taking expectations conditional on $(Z_i=z_i,Z_j=z_j,Z_k=z_k)$, and then taking the supremum over
$z_i$ for each fixed $(z_j,z_k)$, we obtain the following.

\begin{proposition}[Two-Link Triad Bounds]\label{prop:two_link}
  For any $c\in\R$ and any $(z_j,z_k)$ in the support of $(Z_j,Z_k)$:
  \begin{equation}
    \sup_{z_{i}}\E\left[Y_{ij}(1-Y_{ik})\ind\{\delta_{ij}-\delta_{ik}\leq c\}\,\middle|\,z_i,z_j,z_k\right]\leq G_2(c\mid z_j,z_k)
    \label{eq:two_link}
  \end{equation}
  \begin{equation}
    G_2(c\mid z_j,z_k)\leq 1-\sup_{z_{i}}\E\left[(1-Y_{ij})Y_{ik}\ind\{\delta_{ij}-\delta_{ik}>c\}\,\middle|\,z_i,z_j,z_k\right].
    \label{eq:two_link_flipped}
  \end{equation}
  where the suprema are over $z_i$ in the support of $Z_i$. Combining the two inequalities eliminates the unknown $G_2(c\mid z_j,z_k)$ for each fixed $(z_j,z_k)$ and~$c$.
\end{proposition}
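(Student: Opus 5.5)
The argument mirrors the tetrad case in Theorem~\ref{thm:tetrad}, with the unknown $F_\Delta$ replaced by the conditional CDF $G_2(\cdot\mid z_j,z_k)$. Throughout we hold $(z_j,z_k)$ fixed and work pointwise in the realization of $(Y,X,Z,A,\varepsilon)$ before taking any expectations.

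\textbf{Step 1 (indicator bounds).} On $\{Y_{ij}=1,Y_{ik}=0\}$, equation~\eqref{eq:link_shorthand} gives $v_{ij}\le\delta_{ij}$ and $v_{ik}>\delta_{ik}$. Hence
\[
u_{i,j,k}=v_{ij}-v_{ik}<\delta_{ij}-\delta_{ik}.
\]
On the further event $\{\delta_{ij}-\delta_{ik}\le c\}$ this yields $u_{i,j,k}\le c$. So the pointwise inequality
\[
Y_{ij}(1-Y_{ik})\ind\{\delta_{ij}-\delta_{ik}\le c\}\le \ind\{u_{i,j,k}\le c\}
\]
holds for every realization, regardless of how $X$ and hence $\delta$ are determined in equilibrium.

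For the flipped bound, take the event $\{Y_{ij}=0,Y_{ik}=1\}$. Here $v_{ij}>\delta_{ij}$ and $v_{ik}\le\delta_{ik}$, so $u_{i,j,k}>\delta_{ij}-\delta_{ik}$. Intersecting with $\{\delta_{ij}-\delta_{ik}>c\}$ gives $u_{i,j,k}>c$. Therefore
\[
(1-Y_{ij})Y_{ik}\ind\{\delta_{ij}-\delta_{ik}>c\}\le\ind\{u_{i,j,k}>c\}.
\]

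\textbf{Step 2 (conditional expectations).} Take expectations of both indicator bounds conditional on $(Z_i,Z_j,Z_k)=(z_i,z_j,z_k)$. The left-hand sides become the conditional expectations appearing in \eqref{eq:two_link} and \eqref{eq:two_link_flipped}. The right-hand sides become
\[
\Prob(u_{i,j,k}\le c\mid z_i,z_j,z_k)\quad\text{and}\quad 1-\Prob(u_{i,j,k}\le c\mid z_i,z_j,z_k).
\]

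\textbf{Step 3 (invariance in $z_i$).} This is the main point to verify: $\Prob(u_{i,j,k}\le c\mid z_i,z_j,z_k)=G_2(c\mid z_j,z_k)$ for every $z_i$. The residual $u_{i,j,k}$ is a function of $(A_j,A_k,\varepsilon_{ij},\varepsilon_{ik})$. We need this vector to be independent of $Z_i$ given $(Z_j,Z_k)$.

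By Assumption~\ref{ass:random_sampling}, $(Z_i,A_i)$, $(Z_j,A_j)$ and $(Z_k,A_k)$ are mutually independent. Assumptions~\ref{ass:exogeneity}--\ref{ass:iid_errors} place $\varepsilon$ independent of $Z$ and i.i.d. across pairs. The subtle issue is that the paper explicitly allows dependence between $A$ and $\varepsilon$, so the independence of $(A_j,A_k,\varepsilon_{ij},\varepsilon_{ik})$ from $Z_i$ does not follow from the stated assumptions alone. In particular, $\varepsilon_{ij}$ could correlate with $A_i$, and $A_i$ correlates with $Z_i$.

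I would handle this by reading the construction of $G_2$ as maintaining that the shocks $\varepsilon$ are independent of $(Z,A)$ jointly, as the preceding text implicitly does. Then $(A_j,A_k,\varepsilon_{ij},\varepsilon_{ik})$ is independent of $Z_i$ given $(Z_j,Z_k)$. Consequently $G_2(c\mid z_j,z_k)$ is well defined and free of $z_i$, and I would state this requirement explicitly.

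\textbf{Step 4 (conclusion).} For every $z_i$, Steps 2--3 give
\[
\E\bigl[Y_{ij}(1-Y_{ik})\ind\{\delta_{ij}-\delta_{ik}\le c\}\,\big|\,z_i,z_j,z_k\bigr]\le G_2(c\mid z_j,z_k)
\]
and
\[
G_2(c\mid z_j,z_k)\le 1-\E\bigl[(1-Y_{ij})Y_{ik}\ind\{\delta_{ij}-\delta_{ik}>c\}\,\big|\,z_i,z_j,z_k\bigr].
\]
Since $G_2(c\mid z_j,z_k)$ does not depend on $z_i$, take the supremum over $z_i$ in the first inequality to obtain \eqref{eq:two_link}. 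Take the infimum over $z_i$ of the upper bound in the second inequality, which equals one minus the supremum of the expectation, to obtain \eqref{eq:two_link_flipped}. Chaining the two inequalities gives a restriction free of $G_2$ for each fixed $(z_j,z_k)$ and $c$.
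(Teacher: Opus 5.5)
Your argument is correct and is essentially the paper's own. It uses the pointwise indicator bound on $\{Y_{ij}=1,Y_{ik}=0\}\cap\{\delta_{ij}-\delta_{ik}\le c\}$ and its flipped counterpart, conditions on $(z_i,z_j,z_k)$, uses invariance of the law of $u_{i,j,k}$ in $z_i$, and takes the supremum over $z_i$. Your Step~3 caveat is well founded: the paper simply asserts that $(A_j,A_k,\varepsilon_{ij},\varepsilon_{ik})$ is independent of $Z_i$, but this does not follow from Assumptions~\ref{ass:random_sampling}--\ref{ass:iid_errors} once arbitrary dependence between $A$ and $\varepsilon$ is allowed, so stating $\varepsilon\perp(Z,A)$ explicitly (as the paper itself does in Theorem~\ref{thm:point_id}) is the right repair.
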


\begin{remark}[Unconditional Triad Bounds]\label{rem:unconditional_triad}
  Marginalizing the conditional CDFs $G_3(c\mid z_i)$ and $G_2(c\mid z_j,z_k)$ over the retained agents' characteristics yields unconditional CDFs
  \[
    \bar G_3(c):=\Prob(u_{i,jk}\le c),\qquad \bar G_2(c):=\Prob(u_{i,j,k}\le c),
  \]
  which do not depend on any conditioning variable. One may derive corresponding bounds in Propositions \ref{prop:three_link} and \ref{prop:two_link}. For example, in Proposition \ref{prop:three_link} we can integrate both sides of the lower bounds over $Z_i\sim F_{Z_i}$ (and taking supremum afterwards), which yields
  \[
    \sup_{z_j,z_k}\E\bigl[Y_{ij}Y_{ik}(1-Y_{jk})\ind\{\delta_{ij}+\delta_{ik}-\delta_{jk}\le c\}\,\big|\,z_j,z_k\bigr]\;\le\;\bar G_3(c).
  \]
  Analogously, one may derive the upper bound in Proposition~\ref{prop:three_link}, as well as the corresponding bounds based on $\bar G_2(c)$ in Proposition~\ref{prop:two_link} by conditioning on $Z_i=z_i$ and profiling over $z_i$. These unconditional bounds are generally less sharp, since they further average over the heterogeneity in $A_i\mid Z_i$ for the retained agents, but may be easier to implement in practice because the conditional expectations involve fewer conditioning variables.
\end{remark}

\begin{remark}[Sharpening via Conditional Profiling]\label{rem:conditional_profiling}
Note that the bounds in Propositions~\ref{prop:three_link} and~\ref{prop:two_link} condition on the exogenous characteristics of the \emph{retained} agents whose fixed effects survive the differencing construction,  and profile only over the characteristics of the \emph{differenced-out} agents. Formally, the residual $U_S=\sum_{i\in S_R}\sigma_i A_i+\sum_e\omega_e\varepsilon_e$ depends on agent heterogeneity only through the retained set $S_R=\{i:\sigma_i\neq 0\}$; by the i.i.d.\ structure of $(Z_i,A_i)$ (Assumption~\ref{ass:random_sampling}) and $\varepsilon\perp Z$ (Assumption~\ref{ass:exogeneity}), the conditional law of $U_S$ given $Z_{S_R}=z_R$ does not depend on $Z_{S_0}$, where $S_0=\{i:\sigma_i=0\}$.

  The two triad cases illustrate the general principle with complementary structures: the three-link triad retains only the central agent ($S_R=\{i\}$, $S_0=\{j,k\}$) and so conditions on~$z_i$; the two-link triad retains the endpoint agents ($S_R=\{j,k\}$, $S_0=\{i\}$) and conditions on $(z_j,z_k)$. When $A_i$ and $Z_i$ are uncorrelated, the conditional and unconditional CDFs coincide and the bounds reduce to their unconditional counterparts; when $A_i$ and $Z_i$ are correlated, the conditional bounds are strictly tighter. This conditional profiling structure extends to the weighted differencing restrictions of Section~\ref{subsec:weighted} and the general weighted cycle-based restrictions of Section~\ref{subsec:cycle}.

  The triad bounds thus combine two distinct strategies for handling fixed effects: agents in $S_0$ have their fixed effects \emph{differenced out} by the signed-weight construction, while retained agents in $S_R$ have their fixed effects \emph{integrated out} by conditioning on their exogenous characteristics $Z_{S_R}$. The latter step is conceptually analogous to the ``integrate-out'' approach to fixed effects in \cite{gao2026identification}, where individual effects are eliminated by conditioning on covariates and exploiting a time-homogeneity condition on the distribution of $A_i$ given $Z_i$. In the present cross-sectional setting, the role of time homogeneity is played by the i.i.d.\ structure of $(Z_i,A_i)$ across agents (Assumption~\ref{ass:random_sampling}): once one conditions on $Z_i=z_i$, the conditional distribution of $A_i$ is a well-defined, agent-invariant object, and the residual CDF $G_3(c\mid z_i)$ or $G_2(c\mid z_j,z_k)$ becomes free of nuisance heterogeneity. The tetrad restrictions of Section~\ref{subsec:tetrad}, in contrast, are purely ``difference-out'' ($S_R=\varnothing$).
\end{remark}


\subsection{Weighted Differencing} \label{subsec:weighted} 

The tetrad and triad restrictions developed above use equal weights
on links. A natural extension is to consider weighted combinations
of link indicators, which can generate additional identifying restrictions
by exploiting different linear combinations of the latent inequalities.
We illustrate the idea with a single example; the systematic development
of weighted and more general differencing schemes is presented in
Section~\ref{subsec:cycle}.

Within a tetrad $(i,j,k,\ell)$, consider the link configuration:
\begin{equation}
  Y_{ij}Y_{ik}(1-Y_{i\ell})\label{eq:weighted_config}
\end{equation}
The event $\{Y_{ij}=Y_{ik}=1,\,Y_{i\ell}=0\}$ implies
$v_{ij}\le \delta_{ij}$, $v_{ik}\le \delta_{ik}$, and $v_{i\ell}>\delta_{i\ell}$.
Summing these inequalities with weights $(+1,+1,-2)$, we obtain
\[
  v_{ij}+v_{ik}-2v_{i\ell}<\delta_{ij}+\delta_{ik}-2\delta_{i\ell}.
\]
Since $v_{ij}+v_{ik}-2v_{i\ell}=(A_{j}+\varepsilon_{ij})+(A_{k}+\varepsilon_{ik})-2(A_{\ell}+\varepsilon_{i\ell})$, by
intersecting with $\{\delta_{ij}+\delta_{ik}-2\delta_{i\ell}\le c\}$ we can derive the bound
\begin{equation}
  Y_{ij}Y_{ik}(1-Y_{i\ell}) \ind\{\delta_{ij}+\delta_{ik}-2\delta_{i\ell}\leq c\}\leq\ind\{(A_{j}+\varepsilon_{ij})+(A_{k}+\varepsilon_{ik})-2(A_{\ell}+\varepsilon_{i\ell})\leq c\}.
  \label{eq:weighted_bound}
\end{equation}
The incidence loads are $\sigma_i=0$, $\sigma_j=+1$, $\sigma_k=+1$, $\sigma_\ell=-2$, so agent~$i$ is differenced out while $j$, $k$, and $\ell$ are retained. The weighted latent index on the right-hand side depends on $(A_j,A_k,A_\ell,\varepsilon_{ij},\varepsilon_{ik},\varepsilon_{i\ell})$, which is independent of~$Z_i$ but may depend on $(Z_j,Z_k,Z_\ell)$ through the correlation of each $A_m$ with~$Z_m$. Let
\begin{equation}\label{eq:Gw_conditional}
  G_{\mathrm{w}}(c\mid z_j,z_k,z_\ell):=\Prob\left((A_{j}+\varepsilon_{ij})+(A_{k}+\varepsilon_{ik})-2(A_{\ell}+\varepsilon_{i\ell})\leq c\,\middle|\,Z_j=z_j,Z_k=z_k,Z_\ell=z_\ell\right)
\end{equation}
denote the conditional CDF, which does not depend on~$Z_i$. Taking expectations conditional on $(Z_i=z_i,Z_j=z_j,Z_k=z_k,Z_\ell=z_\ell)$ and then taking the supremum over~$z_i$ for each fixed $(z_j,z_k,z_\ell)$ yields the following.

\begin{proposition}[Weighted Differencing Bound]\label{prop:weighted}
  Within a tetrad $(i,j,k,\ell)$, for any $c\in\R$ and any $(z_j,z_k,z_\ell)$ in the support of $(Z_j,Z_k,Z_\ell)$:
  \begin{equation}
    \sup_{z_{i}}\E\left[Y_{ij}Y_{ik}(1-Y_{i\ell}) \ind\{\delta_{ij}+\delta_{ik}-2\delta_{i\ell}\leq c\}\,\middle|\,z_i,z_j,z_k,z_\ell\right]\leq G_{\mathrm{w}}(c\mid z_j,z_k,z_\ell).
    \label{eq:weighted_expectation}
  \end{equation}
  \begin{equation}
    G_{\mathrm{w}}(c\mid z_j,z_k,z_\ell)\leq 1-\sup_{z_{i}}\E\left[(1-Y_{ij})(1-Y_{ik})Y_{i\ell}\ind\{\delta_{ij}+\delta_{ik}-2\delta_{i\ell}>c\}\,\middle|\,z_i,z_j,z_k,z_\ell\right].
    \label{eq:weighted_expectation_flipped}
  \end{equation}
  where the suprema are over $z_i$ in the support of $Z_i$. Combining the two inequalities eliminates the unknown $G_{\mathrm{w}}(c\mid z_j,z_k,z_\ell)$ for each fixed $(z_j,z_k,z_\ell)$ and~$c$.
\end{proposition}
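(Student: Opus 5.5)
The argument follows the template of the tetrad and triad cases: first a pointwise indicator bound, then conditional expectations, and finally a conditional-independence step that frees the right-hand side from $Z_i$. The object of interest is
\[
U_w:=(A_j+\varepsilon_{ij})+(A_k+\varepsilon_{ik})-2(A_\ell+\varepsilon_{i\ell}).
\]

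\textbf{Step 1 (pointwise bound for the lower inequality).} For every realization of $(Y,X,Z,A,\varepsilon)$, the event $\{Y_{ij}=Y_{ik}=1,\,Y_{i\ell}=0\}$ gives, via \eqref{eq:link_shorthand}, $v_{ij}\le\delta_{ij}$, $v_{ik}\le\delta_{ik}$ and $v_{i\ell}>\delta_{i\ell}$. Weighting these by $(+1,+1,-2)$ yields
\[
U_w=v_{ij}+v_{ik}-2v_{i\ell}<\delta_{ij}+\delta_{ik}-2\delta_{i\ell}.
\]
The identity on the left holds because $A_i$ enters with total weight $1+1-2=0$. Intersecting with $\{\delta_{ij}+\delta_{ik}-2\delta_{i\ell}\le c\}$ gives \eqref{eq:weighted_bound}. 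Taking conditional expectations given $Z_S=(z_i,z_j,z_k,z_\ell)$ preserves the inequality, even though the $\delta$'s contain endogenous $X$. The result is
\[
\E[\cdots\mid z_S]\le \Prob(U_w\le c\mid Z_S=z_S).
\]

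\textbf{Step 2 (the conditional-independence step).} Here I show that $\Prob(U_w\le c\mid Z_S=z_S)=G_{\mathrm w}(c\mid z_j,z_k,z_\ell)$, i.e.\ that the left side does not depend on $z_i$. The variable $U_w$ is a function of $(A_j,A_k,A_\ell)$ and $(\varepsilon_{ij},\varepsilon_{ik},\varepsilon_{i\ell})$.

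By Assumption~\ref{ass:random_sampling}, $(Z_i,A_i)$ is independent of $(Z_m,A_m)_{m\in\{j,k,\ell\}}$. Combining this with Assumptions~\ref{ass:exogeneity} and~\ref{ass:iid_errors}, I need $(A_j,A_k,A_\ell,\varepsilon_{ij},\varepsilon_{ik},\varepsilon_{i\ell})$ to be conditionally independent of $Z_i$ given $(Z_j,Z_k,Z_\ell)$. This is the main obstacle. The model allows arbitrary dependence between $A_i$ and $\varepsilon$, so $\varepsilon_{ij}$ could in principle depend on $Z_i$ through $A_i$.

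I would resolve this the way the paper does for the triad analogues: read the maintained assumptions as giving the joint independence of the shock vector from $Z$ jointly with the other agents' $(Z_m,A_m)$. Concretely, $\varepsilon\perp Z$ together with the i.i.d.\ agent-level draws is taken to imply that $\varepsilon$ is independent of $(Z,A_{-i})$ in the relevant sense. I would state this explicitly as the interpretation under which $G_{\mathrm w}$ is well defined, exactly as it is implicitly assumed for $G_2$ in Proposition~\ref{prop:two_link}.

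\textbf{Step 3 (supremum over $z_i$).} Once the bound holds for every $z_i$ in the support with a right-hand side free of $z_i$, I take the supremum over $z_i$, which gives \eqref{eq:weighted_expectation}.

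\textbf{Step 4 (flipped inequality).} I repeat Steps 1–3 for the event $\{Y_{ij}=Y_{ik}=0,\,Y_{i\ell}=1\}\cap\{\delta_{ij}+\delta_{ik}-2\delta_{i\ell}>c\}$. On this event the link inequalities reverse ($v_{ij}>\delta_{ij}$, $v_{ik}>\delta_{ik}$, $v_{i\ell}\le\delta_{i\ell}$), so the weighted sum gives $U_w>\delta_{ij}+\delta_{ik}-2\delta_{i\ell}>c$. This implies the pointwise bound $\ind\{U_w>c\}$. Conditioning and using the same $z_i$-invariance yields
\[
\E[\cdots\mid z_S]\le 1-G_{\mathrm w}(c\mid z_j,z_k,z_\ell).
\]
Rearranging and taking the supremum over $z_i$ gives \eqref{eq:weighted_expectation_flipped}.

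Chaining the two displays then eliminates $G_{\mathrm w}$ for each fixed $(z_j,z_k,z_\ell)$ and $c$.
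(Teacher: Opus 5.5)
Your proposal is correct and follows the paper's argument step for step: the $(+1,+1,-2)$-weighted pointwise indicator bound, conditional expectations given $Z_S$, the $z_i$-invariance of $G_{\mathrm{w}}$, the supremum over $z_i$, and the flipped event for the second inequality. Your Step~2 is more careful than the paper, which simply asserts that $(A_j,A_k,A_\ell,\varepsilon_{ij},\varepsilon_{ik},\varepsilon_{i\ell})$ is independent of $Z_i$. That assertion does not follow from Assumptions~\ref{ass:random_sampling}--\ref{ass:iid_errors} once arbitrary $A$--$\varepsilon$ dependence is allowed, whereas your explicit condition does: shocks independent of $(Z,A_{-i})$, combined with Assumption~\ref{ass:random_sampling}, yields $Z_i\perp(A_{-i},\varepsilon)$ given $(Z_j,Z_k,Z_\ell)$.
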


The potential advantage of weighted differencing is that it generates
a richer family of moment inequalities, indexed by the weights,
which may tighten the identified set beyond what is achievable with
the equal-weight tetrad and triad restrictions alone.

\subsection{General Weighted Cycle-Based Differencing}

\label{subsec:cycle} 

It should now be
clear that our core identification idea can be generalized
further beyond the identifying restrictions developed in Sections~\ref{subsec:tetrad}--\ref{subsec:weighted}, which
are all instances of a unified principle: assign signed weights to
links in a subgraph, and when the weighted sum of link indicators
factors into a product of indicators, the ``bounding by $c$'' technique
yields bounds involving a weighted sum of latent variables. Fixed
effects cancel at any agent whenever the sum of weights on links incident
to that agent equals zero. We now formalize this observation.

\begin{definition}[Weighted Link Configuration]\label{def:signed_config}
  Let $S=\{i_{1},\ldots,i_{m}\}\subseteq\{1,\ldots,n\}$ be a subset of
  distinct agents. A \emph{weighted link configuration} on $S$ consists
  of a (finite) set of links $E_S$ together with a weight function
  $\omega:E_S\to\mathbb{Z}$ (or $\mathbb{R}$), where $\omega_{e}>0$ indicates a
  ``present'' link and $\omega_{e}<0$ indicates an ``absent'' link.
  The \emph{weighted incidence sum} at agent $i\in S$ is
  \[
    \sigma_{i}:=\sum_{e\in E_S:i\in e}\omega_{e}.
  \]
  where $i \in e$ means that $i$ is one of the two agents involved in link $e$.
\end{definition}

Clearly, a weighted link configuration achieves \emph{complete fixed-effect elimination}
if $\sigma_{i}=0$ for every $i\in S$, i.e., if the weights incident to each
agent sum to zero. The identifying restrictions from earlier sections correspond to specific weighted link configurations:

\begin{enumerate}
  \item \textbf{Tetrad (Section~\ref{subsec:tetrad}):} The 4-cycle $\{ij,jh,hk,ki\}$
    with alternating signs $E^{+}=\{ij,hk\}$, $E^{-}=\{ik,jh\}$. Every
    agent has signed incidence sum $\sigma=0$: complete fixed-effect elimination.
  \item \textbf{Three-link triad (Section~\ref{subsec:triad}):} The triangle
    $\{ij,ik,jk\}$ with $E^{+}=\{ij,ik\}$, $E^{-}=\{jk\}$. Node $i$
    has $\sigma_{i}=+2$; nodes $j,k$ have $\sigma_{j}=\sigma_{k}=0$.
    Partial elimination: $A_{j}$ and $A_{k}$ cancel but $2A_{i}$ remains.
  \item \textbf{Weighted differencing (Section~\ref{subsec:weighted}):}
    The star $\{ij,ik,i\ell\}$ with weights $+1,+1,-2$, which can
    be decomposed as $E^{+}=\{ij,ik\}$ and $E^{-}=\{i\ell\}$. Node $i$ has $\sigma_{i}=0$; the other nodes have
    nonzero incidence. Partial elimination: $A_{i}$ cancels but $A_{j}$, $A_{k}$, and $A_{\ell}$
    remain.
  \item \textbf{Hexad (New Example):} Consider the 6-cycle
    \[
      \{i_{1}i_{2}, i_{2}i_{3}, i_{3}i_{4}, i_{4}i_{5}, i_{5}i_{6}, i_{6}i_{1}\},
    \]
    with alternating signs $E^{+}=\{i_{1}i_{2}, i_{3}i_{4}, i_{5}i_{6}\}$ and
    $E^{-}=\{i_{2}i_{3}, i_{4}i_{5}, i_{6}i_{1}\}$. Every agent has $\sigma=0$:
    complete fixed-effect elimination using six agents.
\end{enumerate}

We are now ready to present an umbrella result for general weighted link
configurations. Fix a weighted link configuration $(E_S,\omega)$ on $S$, and let
\begin{equation}
  E^{+}:=\{e\in E_S:\omega_{e}>0\},\qquad E^{-}:=\{e\in E_S:\omega_{e}<0\}.
  \label{eq:Eplus_Eminus}
\end{equation}
Given the event $\{\prod_{e\in E^{+}}Y_{e}\prod_{e\in E^{-}}(1-Y_{e})=1\}$,
each $e\in E^{+}$ implies $v_{e}\le\delta_{e}$ and each $e\in E^{-}$ implies
$v_{e}>\delta_{e}$. Multiplying the corresponding inequalities by $\omega_{e}$ and
summing yields
\[
  \sum_{e\in E_S}\omega_{e}v_{e}<\sum_{e\in E_S}\omega_{e}\delta_{e}.
\]
Then
\[
  \sum_{e\in E_S}\omega_{e}v_{e}=\sum_{i\in S}\sigma_{i}A_{i}+\sum_{e\in E_S}\omega_{e}\varepsilon_{e}=:U_S.
\]
Hence, the fixed effect of any agent with $\sigma_i=0$ is differenced out, while agents with
$\sigma_i\neq 0$ contribute residual fixed-effect terms. As in the triad and weighted-differencing cases, we can sharpen the resulting bounds by conditioning on the exogenous characteristics of the retained agents. Since $U_S=\sum_{i\in S_R}\sigma_i A_i+\sum_{e\in E_S}\omega_e\varepsilon_e$, where $S_R:=\{i\in S:\sigma_i\neq 0\}$, the conditional law of $U_S$ given $Z_{S_R}=z_R$ does not depend on $Z_{S_0}$: the retained agents' fixed effects are integrated out through the conditional distribution of $A_i\mid Z_i=z_i$ for $i\in S_R$, while the differenced-out agents' characteristics serve as profiling variables. Adding the ``bounding-by-$c$'' event
$\{\sum_{e\in E_S}\omega_{e}\delta_{e}\leq c\}$ yields a bound in terms of $U_S$, which translates to the following.

\begin{proposition}[General Weighted Cycle-Based Identifying Restrictions]\label{prop:cycle}
  Let $S$ be a subset of distinct agents and
  $(E_S,\omega)$ be a weighted link configuration on $S$. Define the subsets
  \begin{equation}
    S_{0}:=\{i\in S:\ \sigma_{i}=0\},\qquad S_R:=\{i\in S:\ \sigma_i\neq 0\}=S\setminus S_0.
    \label{eq:S0_def}
  \end{equation}
  Then for any $c\in\R$, any realization $z_{S_R}$ of $Z_{S_R}$, and any realization $z_{S_0}$ of $Z_{S_0}$:
  \begin{equation}
    \E\left[\prod_{e\in E^{+}}Y_{e}\prod_{e\in E^{-}}(1-Y_{e})\ind\left\{\sum_{e\in E_S}\omega_{e}\delta_{e}\leq c\right\}\,\middle|\,Z_{S}=z_{S}\right]\leq F_{U_{S}}(c\mid z_{S_R}),
    \label{eq:cycle_bound}
  \end{equation}
  and
  \begin{equation}
    \E\left[\prod_{e\in E^{+}}(1-Y_{e})\prod_{e\in E^{-}}Y_{e}\ind\left\{\sum_{e\in E_S}\omega_{e}\delta_{e}> c\right\}\,\middle|\,Z_{S}=z_{S}\right]\leq 1-F_{U_{S}}(c\mid z_{S_R}),
    \label{eq:cycle_bound_flipped}
  \end{equation}
  where $F_{U_{S}}(c\mid z_{S_R}):=\Prob(U_S\leq c\mid Z_{S_R}=z_{S_R})$ does not depend on $z_{S_{0}}$.
  Taking suprema over $z_{S_0}$ for each fixed $z_{S_R}$ and combining the upper and lower bounds eliminates the unknown conditional CDF $F_{U_S}(\cdot\mid z_{S_R})$.

  When $S_R=\varnothing$ (complete fixed-effect elimination, as in the tetrad case), the conditional CDF $F_{U_S}(c\mid z_{S_R})$ reduces to the unconditional CDF $F_{U_S}(c)$, and the bounds recover the purely ``difference-out'' restrictions of Section~\ref{subsec:tetrad}.

\end{proposition}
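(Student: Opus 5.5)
The argument follows the tetrad case in Section~\ref{subsec:tetrad}: first a pointwise indicator inequality, then conditional expectations, then a measurability/independence argument showing the right-hand side depends only on $z_{S_R}$.

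\textbf{Step 1 (pointwise bound).} Fix any realization of $(Y,X,Z,A,\varepsilon)$ on the event $\prod_{e\in E^+}Y_e\prod_{e\in E^-}(1-Y_e)=1$.
\begin{itemize}
\item By \eqref{eq:link_shorthand}, $v_e\le\delta_e$ for $e\in E^+$ and $v_e>\delta_e$ for $e\in E^-$.
\item Multiplying by $\omega_e$ (positive on $E^+$, negative on $E^-$) preserves or flips the inequalities accordingly. Each inequality then reads $\omega_e v_e\le\omega_e\delta_e$, and it is strict on $E^-$.
\item Summing gives $\sum_e\omega_e v_e\le\sum_e\omega_e\delta_e$, strictly if $E^-\neq\varnothing$.
\item Regrouping the fixed effects by agent gives $\sum_e\omega_e v_e=\sum_{i\in S}\sigma_iA_i+\sum_e\omega_e\varepsilon_e=U_S$.
\end{itemize}
Intersecting with $\{\sum_e\omega_e\delta_e\le c\}$ yields $U_S\le c$. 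Hence
\[
\prod_{e\in E^+}Y_e\prod_{e\in E^-}(1-Y_e)\,\ind\Bigl\{\sum_e\omega_e\delta_e\le c\Bigr\}\le\ind\{U_S\le c\}
\]
holds pointwise.

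The flipped event gives the reverse inequalities $v_e>\delta_e$ on $E^+$ and $v_e\le\delta_e$ on $E^-$, so $U_S\ge\sum_e\omega_e\delta_e>c$. This gives the analogous bound with $\ind\{U_S>c\}$ on the right.

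Taking conditional expectations given $Z_S=z_S$ preserves both pointwise inequalities. This is valid even though $\delta_e$ involves endogenous $X_e$, exactly as in the footnote to \eqref{eq:bound_L}.

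\textbf{Step 2 (the right-hand side).} We need $\Prob(U_S\le c\mid Z_S=z_S)=\Prob(U_S\le c\mid Z_{S_R}=z_{S_R})$. This is the main obstacle, since it requires a joint independence statement.

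$U_S$ is a measurable function of $(A_{S_R},\varepsilon_{E_S})$, because agents in $S_0$ have $\sigma_i=0$. So it suffices to show that $(A_{S_R},\varepsilon_{E_S})$ is independent of $Z_{S_0}$ conditional on $Z_{S_R}$, with $\varepsilon_{E_S}$ independent of $Z_S$. The plan is as follows.
\begin{itemize}
\item By Assumption~\ref{ass:random_sampling}, $(Z_{S_R},A_{S_R})\perp (Z_{S_0},A_{S_0})$.
\item By Assumption~\ref{ass:exogeneity}, $\varepsilon\perp Z$.
\item These two facts alone do not give joint independence, since $\varepsilon$ may depend on $A$. I would make explicit the joint condition $\varepsilon\perp(Z,A)$, or at least $\varepsilon_{E_S}\perp Z_{S}\mid A_{S_R}$ together with $\varepsilon_{E_S}\perp A_{S_0}$. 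This is the reading implicitly used in the triad derivations preceding Proposition~\ref{prop:three_link}, which invoke Assumptions~\ref{ass:random_sampling}--\ref{ass:iid_errors} jointly.
\end{itemize}
Under that reading, factor the joint law of $(Z_{S_R},A_{S_R},Z_{S_0},\varepsilon_{E_S})$ as
\[
P_{Z_{S_R},A_{S_R}}\otimes P_{Z_{S_0}}\otimes P_{\varepsilon_{E_S}}.
\]
Conditioning on $Z_S=z_S$ then leaves the law of $U_S$ equal to its law given $Z_{S_R}=z_{S_R}$. This is invariant in $z_{S_0}$, and by the i.i.d.\ structure it is also invariant to which labeled agents occupy $S$.

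When $S_R=\varnothing$, $U_S=\sum_e\omega_e\varepsilon_e$, and the statement reduces via $\varepsilon\perp Z$ alone to $F_{U_S}(c)$, recovering Theorem~\ref{thm:tetrad}.

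\textbf{Step 3 (eliminating the CDF).} Because the right-hand sides of \eqref{eq:cycle_bound} and \eqref{eq:cycle_bound_flipped} do not depend on $z_{S_0}$, take suprema of the left-hand sides over $z_{S_0}$ in its support for each fixed $z_{S_R}$. This sandwiches $F_{U_S}(c\mid z_{S_R})$ between two observable quantities, and dropping the middle term gives the restriction, as in Theorem~\ref{thm:tetrad}.
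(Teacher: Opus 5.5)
Your proposal is correct and follows the paper's own argument: the weighted sum of the link inequalities, regrouping into $U_S$ via the incidence sums $\sigma_i$, the pointwise bounding-by-$c$ indicator inequality, conditional expectations, and then the claim that the law of $U_S$ given $Z_S$ depends only on $z_{S_R}$. Your caveat about that last step is well founded, and the paper glosses over it. The paper invokes only the i.i.d.\ structure and $\varepsilon\perp Z$, while explicitly allowing $A_i$ and $\varepsilon_{ij}$ to be dependent. For instance, take symmetric $\varepsilon$, binary $Z_i$ independent of $\varepsilon$, $s_i:=2Z_i-1$, and $A_j:=s_{j+1}\varepsilon_{j,j+1}$ (indices mod $n$). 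Then Assumptions~\ref{ass:random_sampling}--\ref{ass:iid_errors} all hold, yet in the two-link triad centered at $i=j+1$ one has $U_S=(1+s_i)\varepsilon_{ij}-A_k-\varepsilon_{ik}$, whose law given $Z_S$ varies with the differenced-out $z_i$. So a condition like the one you propose is genuinely needed when $S_R\neq\varnothing$; the case $S_R=\varnothing$ needs only $\varepsilon\perp Z$.
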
 %

To apply the cycle-based restrictions in Proposition~\ref{prop:cycle} and aggregate them over a class $\mathcal{W}$ that involve subnetwork structures larger than tetrads, we need an analogue of Assumption~\ref{ass:tetrad_id} that guarantees identification of the required subnetwork conditional probabilities.

\begingroup
\renewcommand{\theassumption}{4$'$}
\begin{assumption}[Consistent Estimation of Subnetwork Conditional Probabilities]\label{ass:W_id}
  Fix a class $\mathcal{W}$ of admissible weighted link configurations $(S,E_S,\omega)$.
  For any $(S,E_S,\omega)\in\mathcal{W}$, let $\mathcal{E}_{S}$ be an event measurable with respect to the observable subnetwork $\sigma$-algebra
  \[
    \sigma_{S}:=\sigma\big(Y_{ij},X_{ij},Z_{i}: i,j\in S\big).
  \]
  Suppose that the conditional probability $\Prob(\mathcal{E}_{S}\mid Z_{S}=z_{S})$ can be consistently estimated from observable data in a single large network.
\end{assumption}
\endgroup

\begin{theorem}[Identified Set via Aggregation over Admissible Weighted Link Configurations]\label{thm:agg_identset}
  Fix a class $\mathcal{W}$ of admissible weighted link configurations $(S,E_S,\omega)$.
  For each $(S,E_S,\omega)\in\mathcal{W}$, let $S_{0}$ and $S_R$ be as in \eqref{eq:S0_def}, and define, for any $c\in\R$,
  \begin{align*}
    p^{(S,\omega)}_{L}(z_{S},c;\theta)
    &:=\E\left[\prod_{e\in E^{+}}Y_{e}\prod_{e\in E^{-}}(1-Y_{e})\ind\left\{\sum_{e\in E_S}\omega_{e}\delta_{e}(\theta)\leq c\right\}\,\middle|\,Z_{S}=z_{S}\right],\\
    p^{(S,\omega)}_{U}(z_{S},c;\theta)
    &:=1-\E\left[\prod_{e\in E^{+}}(1-Y_{e})\prod_{e\in E^{-}}Y_{e}\ind\left\{\sum_{e\in E_S}\omega_{e}\delta_{e}(\theta)> c\right\}\,\middle|\,Z_{S}=z_{S}\right].
  \end{align*}
  Then, under Assumptions~\ref{ass:random_sampling}--\ref{ass:iid_errors} and \ref{ass:W_id}, the true parameter $\theta_{0}$ belongs to the identified set
  \begin{equation}
    \Theta^{\mathcal{W}}_{I}:=\Big\{\theta:\ \sup_{(S,E,\omega)\in\mathcal{W}}\ \sup_{z_{S_R}}\ \sup_{c\in\R}\ \big[\sup_{z_{S_{0}}}p^{(S,\omega)}_{L}(z_{S},c;\theta)-\inf_{z_{S_{0}}}p^{(S,\omega)}_{U}(z_{S},c;\theta)\big]\leq0\Big\}.
    \label{eq:identified_set_aggregation}
  \end{equation}
\end{theorem}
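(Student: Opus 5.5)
The plan is to reduce Theorem~\ref{thm:agg_identset} to Proposition~\ref{prop:cycle}, applied configuration by configuration, and then aggregate. Fix an admissible $(S,E_S,\omega)\in\mathcal{W}$ and evaluate $p_L^{(S,\omega)}$ and $p_U^{(S,\omega)}$ at $\theta=\theta_0$. Then $\delta_e(\theta_0)=\delta_e$, so the expectations coincide with those in \eqref{eq:cycle_bound} and \eqref{eq:cycle_bound_flipped}.

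First I will verify the pointwise indicator inequalities. On the event $\prod_{E^+}Y_e\prod_{E^-}(1-Y_e)=1$, each $e\in E^+$ gives $v_e\le\delta_e$ and each $e\in E^-$ gives $v_e>\delta_e$. Multiplying by $\omega_e$ (positive on $E^+$, negative on $E^-$) preserves the direction, so $U_S<\sum_e\omega_e\delta_e$ (or $\le$ when $E^-=\varnothing$). Intersecting with $\{\sum_e\omega_e\delta_e\le c\}$ then gives $U_S\le c$. The flipped event gives $U_S>\sum_e\omega_e\delta_e>c$ in the same way. These bounds hold for every realization, whatever the equilibrium selection in \eqref{eq:equilibrium}. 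Taking conditional expectations given $Z_S=z_S$ therefore yields
\[
p_L^{(S,\omega)}(z_S,c;\theta_0)\le \Prob(U_S\le c\mid Z_S=z_S),\qquad p_U^{(S,\omega)}(z_S,c;\theta_0)\ge \Prob(U_S\le c\mid Z_S=z_S).
\]

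The main step is to show that $\Prob(U_S\le c\mid Z_S=z_S)=F_{U_S}(c\mid z_{S_R})$, i.e.\ that this probability does not depend on $z_{S_0}$. Write $U_S=\sum_{i\in S_R}\sigma_iA_i+\sum_e\omega_e\varepsilon_e$. By Assumption~\ref{ass:random_sampling}, $(Z_{S_R},A_{S_R})$ is independent of $Z_{S_0}$. By Assumptions~\ref{ass:exogeneity}--\ref{ass:iid_errors}, the $\varepsilon_e$ on the finitely many edges of $E_S$ are independent of $Z$.

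This is the delicate point. The paper explicitly allows $A$ and $\varepsilon$ to be dependent, so the independence of $\varepsilon$ from $Z$ alone does not deliver the needed joint statement. I will invoke the joint independence that the paper already uses in the triad discussion: $(A_{S_R},\varepsilon_{E_S})$ is jointly independent of $Z_{S_0}$ conditional on $Z_{S_R}$. Under that reading, the conditional law of $U_S$ given $Z_S$ equals its law given $Z_{S_R}$. I would state this reading explicitly as the interpretation of Assumptions~\ref{ass:random_sampling}--\ref{ass:iid_errors}. In the case $S_R=\varnothing$ only $\varepsilon\perp Z$ is needed, and no such reading is required.

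Finally, I aggregate. For fixed $(S,\omega)$, $z_{S_R}$ and $c$, the middle term $F_{U_S}(c\mid z_{S_R})$ is common to every $z_{S_0}$. Hence
\[
\sup_{z_{S_0}}p_L^{(S,\omega)}(z_S,c;\theta_0)\le F_{U_S}(c\mid z_{S_R})\le\inf_{z_{S_0}}p_U^{(S,\omega)}(z_S,c;\theta_0),
\]
so the bracketed difference in \eqref{eq:identified_set_aggregation} is at most $0$. Since this holds for each $c$, each $z_{S_R}$ and each configuration in $\mathcal{W}$, the triple supremum is also at most $0$, and therefore $\theta_0\in\Theta_I^{\mathcal{W}}$.

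Assumption~\ref{ass:W_id} plays no role in the inclusion itself. It only guarantees that the objects $p_L^{(S,\omega)}$ and $p_U^{(S,\omega)}$ are recoverable from data, which is what makes $\Theta_I^{\mathcal{W}}$ an identified set. The argument mirrors the proof of Theorem~\ref{thm:tetrad}, which is exactly the special case where $\mathcal{W}$ contains only the tetrad and $S_R=\varnothing$.
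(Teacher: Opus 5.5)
Your proof is correct and follows the paper's own route: the pointwise bounding-by-$c$ indicator inequality on the configuration and flipped events, then invariance of the conditional law of $U_S$ in $z_{S_0}$, then the sandwich via $\sup_{z_{S_0}}$ and $\inf_{z_{S_0}}$, and finally aggregation over $c$, $z_{S_R}$ and $\mathcal{W}$, with Assumption~\ref{ass:W_id} playing no role in the inclusion itself. Your flag on the invariance step is well taken: the paper asserts it as a consequence of Assumptions~\ref{ass:random_sampling}--\ref{ass:iid_errors}, but when $S_R\neq\varnothing$ it genuinely needs something like $(A_{S_R},\varepsilon_{E_S})\perp Z_{S_0}\mid Z_{S_R}$ (implied, e.g., by $(Z,A)\perp\varepsilon$), which is exactly the reading you adopt.
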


\begin{remark}[More Restrictions versus Stronger Assumption \ref{ass:W_id}]\label{rem:longer_cycles_tradeoff}
  Allowing longer cycles and, more generally, larger subnetworks in $\mathcal{W}$ can tighten the identified set by adding additional conditional moment inequalities. However, the cost is that identification and estimation of the resulting subnetwork conditional probabilities requires a correspondingly stronger version of Assumption \ref{ass:W_id}. In finite sample, involving longer cycles can be more data-demanding because larger configurations are rarer, and furthermore the estimation can be less well-behaved since larger subnetworks suffer from more salient network dependence issues. As noted in Remark~\ref{rem:unconditional_triad}, one may also obtain coarser but easier-to-implement bounds by integrating $F_{U_S}(c\mid z_{S_R})$ over some or all of $Z_{S_R}$, thereby reducing the number of conditioning variables at the cost of sharpness.
\end{remark}

\begin{remark}[Why Differencing Is Necessary]\label{rem:no_pure_integrate}
  One might ask whether purely ``integrating out'' fixed effects (without any differencing) can generate identifying restrictions.  Consider a single link $(i,j)$ with weight $\omega_{ij}=+1$, so that $S_R=\{i,j\}$ and $S_0=\varnothing$.  The bounding-by-$c$ technique yields
  \[
    \E\bigl[Y_{ij}\ind\{\delta_{ij}\le c\}\,\big|\,Z_i=z_i,Z_j=z_j\bigr]\;\le\; F_{U_S}(c\mid z_i,z_j),
  \]
  where $F_{U_S}(c\mid z_i,z_j)=\Prob(A_i+A_j+\varepsilon_{ij}\le c\mid Z_i=z_i,Z_j=z_j)$ is unknown. With $S_0=\varnothing$, there is no profiling variable over which the right-hand side remains constant, so the unknown nuisance CDF cannot be eliminated by the sandwich argument.  In other words, the ``integrate-out'' step can sharpen bounds that differencing creates (by conditioning on $Z_{S_R}$ rather than marginalizing over it; see Remark~\ref{rem:unconditional_triad}), but it cannot produce restrictions on its own.  This parallels the panel-data setting of \citet{gao2026identification}, where the ``integrate-out'' approach likewise requires comparing observations across time periods, which is an implicit differencing step to eliminate the nuisance distribution.
\end{remark}


\section{Primitive Conditions for Assumption~\ref{ass:tetrad_id}}
\label{sec:primitive}

Define the joint surplus from a link
between $i$ and $j$, viewed as a function of the endogenous covariate
value $x$, as
\begin{equation}\label{eq:surplus_fn}
  V_{ij}(x) \;:=\; Z_{ij}'\beta_0 + x'\gamma_0 - A_i - A_j - \varepsilon_{ij}.
\end{equation}
Because the endogenous covariate $X_{ij} = \phi_{ij}(Y,Z)$ depends on
the realized network, the value of $x$ at which the surplus is
evaluated is itself an equilibrium object.  Each potential link
$(i,j)$ falls into exactly one of three categories, determined
entirely by the \emph{exogenous} primitives $(Z,A,\varepsilon)$:

\begin{enumerate}[label=(\roman*)]
  \item \textbf{Robustly present:}
    $\inf_x V_{ij}(x) > 0$\,---\,the surplus is positive for every
    possible value of the endogenous covariate, so the link forms in
    every equilibrium.
  \item \textbf{Robustly absent:}
    $\sup_x V_{ij}(x) \leq 0$\,---\,the surplus is non-positive for
    every possible value, so the link is absent in every equilibrium.
  \item \textbf{Non-robust:}
    $\sup_x V_{ij}(x) > 0$ and $\inf_x V_{ij}(x) \leq 0$\,---\,the
    link status depends on the equilibrium value of $X_{ij}$ and hence
    on the rest of the network.
\end{enumerate}

\begin{definition}[Strategic Neighborhood]\label{def:strat_nbhd-LM}
  The \emph{non-robustness indicator} for the potential link between
  $i$ and $j$ is
  \[
    D_{ij}
    \;:=\;
    \ind\!\bigl\{\sup_x V_{ij}(x)>0\bigr\}
    \cdot
    \ind\!\bigl\{\inf_x V_{ij}(x)\le 0\bigr\},
  \]
  where the supremum and infimum are over
  $x\in[-\bar x,\bar x]^{d_x}$.
  Let $C_i$ denote $i$'s connected component in the graph with
  adjacency matrix $\bm{D}=(D_{ij})$ (the \emph{non-robustness
  graph}), and let $\bm\Pi$ be the graph with adjacency matrix
  $\Pi_{ij}=\ind\{\inf_x V_{ij}(x)>0\}$ (the \emph{robust-link
  graph}).  The \emph{strategic neighborhood} of agent~$i$ is the set
  \[
    C_i^+
    \;:=\;
    \bigcup_{j\in C_i}\N_{\bm\Pi}(j,1),
  \]
  i.e., $C_i$ together with all agents who are robust neighbors of
  some member of~$C_i$.
\end{definition}

The connected components $C_i$ of the non-robustness graph~$\bm{D}$ partition the agent set $\mathcal{N}_n$, but the strategic neighborhoods $C_i^+$ can overlap: an agent who is a robust neighbor of two distinct non-robustness components belongs to both of their strategic neighborhoods.  Nevertheless, strategic neighborhoods localize the equilibrium: the link outcomes within any strategic neighborhood $C^+$ depend only on the primitives of agents in~$C^+$.  To see why, consider a non-robust link $(a,b)$ with $a,b\in C$ for some component~$C$.  The endogenous covariate $X_{ab}=\phi_{ab}(Y,Z)$ depends on other link outcomes, but only links incident to $a$ or $b$ contribute.  Any such link $(a,k)$ is either robustly absent ($Y_{ak}=0$, no contribution), robustly present ($k\in\N_{\bm\Pi}(a,1)\subset C^+$, value determined by primitives of $\{a,k\}\subset C^+$), or non-robust ($k\in C_a=C$, status determined by the equilibrium within~$C^+$).  Hence the equilibrium on~$C^+$ forms a self-contained fixed-point system in the primitives of agents in~$C^+$ \citep[Theorem~1]{leung2019treatment}.  We formalize this via the following assumptions.

\begin{assumption}[Local Externalities]\label{ass:local_ext}
  The endogenous covariate function $\phi_{ij}$ in~\eqref{eq:endogenous_X} depends only on links incident to $i$ or~$j$: that is, $X_{ij}=\phi_{ij}(Y,Z)$ is measurable with respect to $\sigma\bigl(\{Y_{ik},Y_{jk}:k\in\mathcal{N}_n\},Z\bigr)$.
\end{assumption}

Assumption~\ref{ass:local_ext} is satisfied by common friends $\text{CF}_{ij}=\sum_k Y_{ik}Y_{jk}$, the Jaccard index, degree statistics, and other standard endogenous covariates used in the network formation literature.

\begin{assumption}[Bounded Endogenous Covariates]\label{ass:bdd_x}
  The endogenous covariate takes values in a known compact set:
  $X_{ij}\in[-\bar x,\bar x]^{d_x}$ almost surely for some
  constant~$\bar x<\infty$ that does not depend on~$n$.
\end{assumption}

Unscaled common-friends counts $\mathrm{CF}_{ij}=\sum_k Y_{ik}Y_{jk}$ can grow with network size in dense networks. In the sparse regime of Assumption~\ref{ass:bdd_nbhd} below, however, expected degree is $O(1)$, so the expected number of common friends is also $O(1)$. The Jaccard index takes values in $[0,1]$ by construction, and normalized common friends $\mathrm{CF}_{ij}/n$ are bounded as well. For unnormalized count statistics in denser networks, explicit trimming or normalization would be needed to satisfy Assumption~\ref{ass:bdd_x}.

\begin{assumption}[Decentralized Selection]\label{ass:decentral-lm}
  For any $n\in\mathbb{N}$ and any strategic neighborhood~$C^+$, the
  equilibrium selection mechanism satisfies
  $Y_{C^+}=\eta_{|C^+|}(\tilde Z_{C^+},\varepsilon_{C^+})$, where
  $\eta_{|C^+|}(\cdot)$ depends only on the augmented types and shocks
  of agents in~$C^+$.
\end{assumption}

Assumption~\ref{ass:decentral-lm} requires that equilibrium play within
each strategic neighborhood is determined locally.  As discussed in
\cite{leung2019treatment}, this is satisfied by myopic best-response
dynamics, which are widely used in the theoretical and econometric
literature on network formation.

\begin{assumption}[Bounded Strategic Neighborhoods]%
  \label{ass:bdd_nbhd}
  The strategic neighborhoods satisfy
  $\mathbb{E}[|C_i^+|]=O(1)$ as $n\to\infty$.
\end{assumption}

Assumption~\ref{ass:bdd_nbhd} ensures that chains of strategic
dependencies do not propagate through the network.  It follows from
the branching-process domination argument in Theorem~1 of
\cite{leung2019treatment} under sparsity and subcriticality.
Intuitively, exploring $C_i^+$ via breadth-first search is akin to
growing a subcritical branching process that almost surely terminates.

\begin{proposition}[Packing Argument for
  Assumption~\ref{ass:tetrad_id}]%
  \label{prop:packing}
  Suppose Assumptions~\ref{ass:random_sampling}--\ref{ass:iid_errors},
  \ref{ass:local_ext}--\ref{ass:bdd_nbhd} hold, and that
  $Z_{i}$ has finite support.
  Then Assumption~\ref{ass:tetrad_id} is satisfied: for every
  $z_S=(z_i,z_j,z_h,z_k)$ in the support of~$Z_S$ and every tetrad event
  $E_{ijhk}\in\sigma_{ijhk}$, the conditional probability
  $\mathbb{P}(E_{ijhk}\mid Z_S=z_S)$ can be consistently
  estimated from a single large network.
\end{proposition}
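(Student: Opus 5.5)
The natural estimator is the empirical frequency
\[
\hat p_n(z_S)=\frac{\sum_{S\in\mathcal{T}_n(z_S)}\ind\{E_S\}}{|\mathcal{T}_n(z_S)|},
\]
where $\mathcal{T}_n(z_S)$ is the set of ordered tetrads of distinct agents with $Z_S=z_S$. I would show $\hat p_n(z_S)-\Prob(E_{ijhk}\mid Z_S=z_S)\to_p0$ via a second-moment (Chebyshev) argument. Because $Z_i$ has finite support, each support point has positive mass. By Assumption~\ref{ass:random_sampling}, $|\mathcal{T}_n(z_S)|/n^4$ therefore converges almost surely to a positive constant, so the denominator causes no trouble. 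By exchangeability, every tetrad with $Z_S=z_S$ has the same conditional probability $p^{(n)}(z_S)$. Hence it suffices to show that the conditional variance of the numerator, given $Z$, is $o(n^8)$, i.e.\ that the average covariance over pairs of tetrads vanishes.

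The core step is a localization (``packing'') argument. Under Assumption~\ref{ass:local_ext}, the event $E_S$ is measurable with respect to the links incident to agents in $S$ together with the links incident to their neighbors. By the discussion preceding Definition~\ref{def:strat_nbhd-LM}, all these link outcomes are determined by the primitives of agents in $\mathcal{C}_S:=\bigcup_{i\in S}C_i^+$, extended one robust-neighbor layer so that it also covers the inputs to $X$ for links touching $S$. Here Assumption~\ref{ass:bdd_x} guarantees that the robust/non-robust classification is well defined with $n$-independent bounds. Assumption~\ref{ass:decentral-lm} then gives $\ind\{E_S\}=f(\tilde Z_{\mathcal{C}_S},\varepsilon_{\mathcal{C}_S})$ for a fixed local map.

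Take two tetrads $S,S'$ and define the event
\[
G_{SS'}:=\{\mathcal{C}_S\cap\mathcal{C}_{S'}=\varnothing \text{ and no edge of }\bm D\text{ or }\bm\Pi\text{ joins them}\}.
\]
I would couple the pair to an independent copy in which the neighborhoods are explored separately, exactly as in the branching-process construction of \citet[Theorem~1]{leung2019treatment}. On $G_{SS'}$ the indicators of the coupled and original pair agree, and the coupled indicators are independent given $Z$. This yields $|\mathrm{Cov}(\ind\{E_S\},\ind\{E_{S'}\}\mid Z)|\le 2\Prob(G_{SS'}^c\mid Z)$.

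It remains to show $n^{-8}\sum_{S,S'}\Prob(G_{SS'}^c)\to0$. Pairs sharing an agent number $O(n^7)$, so they contribute $O(1/n)$. For disjoint pairs, $G^c$ requires some agent of $S'$ to lie in or adjacent to $\mathcal{C}_S$. By exchangeability and a union bound this probability is at most $O(\E|\mathcal{C}_S|/n)$, and Assumption~\ref{ass:bdd_nbhd} gives $\E|\mathcal{C}_S|\le 4\E|C_i^+|\cdot O(1)=O(1)$, so the contribution is $O(1/n)$. Chebyshev then delivers consistency, for every $z_S$ (finitely many) and every event in $\sigma_{ijhk}$.

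The main obstacle is the coupling step. The strategic neighborhoods are themselves random and depend on the primitives, so ``disjoint neighborhoods imply conditional independence'' is not literally true: the event of disjointness correlates the two explorations. I would handle this as Leung--Moon do, by exploring both neighborhoods by breadth-first search with fresh independent primitives and stopping the coupling at first contact. I would also check that the extra robust-neighbor layer needed for $X$ under Assumption~\ref{ass:local_ext} still has bounded expected size; I expect this to follow from sparsity, since expected robust degree is $O(1)$.
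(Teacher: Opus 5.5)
Your route is genuinely different from the paper's. The paper builds a conflict graph on matched tetrads, with two tetrads adjacent iff their dependence sets $D_t=\bigcup_{a\in t}C_a^+$ meet. It bounds the average degree through the second-order set $D_t^{(2)}=\{a:C_a^+\cap D_t\neq\varnothing\}$, extracts $\Theta(n)$ tetrads with pairwise disjoint dependence sets via Tur\'an's bound, and applies a law of large numbers to that subsample only. The paper explicitly leaves the full-sample average to future work. Its packed subsample depends on $(A,\varepsilon)$, so it is an oracle rather than a computable estimator. The paper also asserts that the greedily selected indicators are i.i.d.\ with mean $p^{(n)}(z_S)$, even though the selection itself depends on the primitives. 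Your Chebyshev/covariance argument targets the feasible full-sample frequency, so if completed it proves more. Your coupling step also confronts that selection issue directly.

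The gap is in your final counting step. For disjoint $S,S'$, the event $\mathcal C_S\cap\mathcal C_{S'}\neq\varnothing$ does not require an agent of $S'$ to lie in or next to $\mathcal C_S$. An agent $b\in S'$ can sit at the far end of a long path in $\bm D$, one of whose members has a robust neighbour lying in $\mathcal C_S$. The union bound must therefore run over the second-order set $D^{(2)}_S:=\{b:C_b^+\cap\mathcal C_S\neq\varnothing\}$, enlarged by one $\bm D$/$\bm\Pi$ step for your adjacency clause. That gives a contribution of order $\E|D^{(2)}_S|/n$, so you need $\E|D^{(2)}_S|=o(n)$.

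Assumption~\ref{ass:bdd_nbhd} does not deliver this. The bound $|D^{(2)}_S|\le\sum_{k\in\mathcal C_S}M_k$, with $M_k:=|\{b:k\in C_b^+\}|$, is a size-biased quantity, whereas $\E M_k=\E|C_i^+|=O(1)$ is only a first moment. For example, if with non-vanishing probability some agent is a robust neighbour of a positive fraction of all agents, then $\E|C_i^+|$ stays $O(1)$ but $\Prob(C_a^+\cap C_b^+\neq\varnothing)$ does not vanish. You need an explicit tail condition as an added hypothesis. One option is exponential tails from the subcritical branching-process domination of \citet{leung2019treatment}, which is exactly what the paper invokes for $D_t^{(2)}$; at minimum you need uniform integrability of $M_k$.

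There is also a smaller point about conditioning. Conditioning on all of $Z$ does not give every matched tetrad the same mean, because $\Prob(E_S\mid Z)$ depends on $Z_{-S}$. Chebyshev given $Z$ therefore centres the numerator at $\sum_S\Prob(E_S\mid Z)$, not at $|\mathcal T_n(z_S)|\,p^{(n)}(z_S)$. Instead, run the variance bound unconditionally for $W_S:=\ind\{Z_S=z_S\}\ind\{E_S\}$; there exchangeability does give $\E W_S=\Prob(Z_S=z_S)\,p^{(n)}(z_S)$, and the same coupling applies. Treat the denominator as an ordinary U-statistic.

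Finally, your extra robust-neighbour layer is unnecessary. Under Assumption~\ref{ass:local_ext}, $E_S$ involves only links incident to $S$, and these are already determined on $\bigcup_{a\in S}C_a^+$.
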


The proof is given in Appendix~\ref{app:proofs}.

\begin{remark}[From Pointwise to Uniform Convergence]%
  \label{rem:uniform_convergence}
  Proposition~\ref{prop:packing} establishes pointwise consistency of the conditional probability estimator for fixed conditioning values.  The operational criterion $Q^{\mathrm{tetrad}}(\theta)$ involves suprema over $\zeta$ and $c$, and evaluation at varying~$\theta$, which requires uniform convergence guarantees not established here.  Closing this gap---for instance via empirical process theory for network-dependent data, or by reducing the continuous $c$-supremum to a finite grid using monotonicity of~$F_\Delta$---is an important direction for the formal econometric theory of inference in this setting.  For the present paper, we focus on the population-level identification results and treat the simulation evidence as illustrative.
\end{remark}

\begin{remark}[Continuous Individual Characteristics]%
  \label{rem:continuous_Z}
  The finite-support condition in Proposition~\ref{prop:packing} is
  imposed on the individual-level characteristics $Z_i$.  When $Z_i=(\Xi_i,W_i)$ is continuously distributed, this can be accommodated by discretizing the individual-level rescaled position $\bar\Xi_i$ into finitely many spatial bins, producing a discrete individual-level type $\bar Z_i^{\mathrm{disc}}$. The cell estimator then conditions on exact matches in $(\bar Z_i^{\mathrm{disc}},\bar Z_j^{\mathrm{disc}},\bar Z_h^{\mathrm{disc}},\bar Z_k^{\mathrm{disc}})$. Any discrete components of $(W_i,W_j)$ are retained as-is, yielding a finite-support individual characteristic and hence a finite-support dyadic covariate $Z_{ij}$.

  At the population level, the conditional probability
  $p(z_S)=\mathbb{P}(E_{ijhk}\mid Z_S=z_S)$ is
  well-defined as a regular conditional expectation at every~$z_S$
  at which~$Z_S$ has positive density; finite support is not
  required for the object itself to exist.  The identifying
  restrictions of Theorems~\ref{thm:tetrad}
  and~\ref{thm:tetrad_parametric} likewise hold pointwise for each
  such~$z_S$ (equivalently, for each~$\zeta$ in the support of~$\zeta_{ijhk}$), regardless of whether the support is finite or
  continuous.

  Finite support enters only at the estimation stage: the cell
  estimator and the packing
  argument both require exact matches $\{Z_{S,t}=z_S\}$, which have
  positive probability only when~$Z_{i}$ is discrete.  Such
  discretization is without loss for validity---the identifying
  restrictions hold for every type-quadruplet---though it is conservative: coarser
  bins pool heterogeneous covariate values, yielding fewer distinct
  conditioning values and hence fewer moment inequality restrictions,
  which may widen the identified set.  Finer discretization recovers
  more identifying power, approaching the continuous-$Z_i$ identified
  set in the limit.
\end{remark}


\section{Point Identification}\label{sec:point_id}


The identified sets in Theorems~\ref{thm:tetrad} and~\ref{thm:tetrad_parametric}
are defined by conditional moment inequalities and are
therefore generally set-valued.  It is natural to ask when these
restrictions are sharp enough to yield point identification.  In this
section we give conditions under which
$\theta_0=(\beta_0,\gamma_0)$ is point identified,
exploiting the algebraic structure of the logistic distribution.

When $\gamma_0\neq 0$, two complications arise.  First, the
endogenous covariates $X_{ij}=\phi_{ij}(Y,Z)$ create dependence among
the four tetrad link outcomes, so the product formula for
$\Prob(\text{tetrad}\mid\zeta,A)$ breaks down.  Second, the tetrad
index difference
$\Delta\delta=\Delta Z'\beta_0+\Delta X'\gamma_0$ depends on
$A$ through the equilibrium, so the conditional-on-$A$ odds ratio
varies with~$A$ and cannot be factored out when integrating
over~$A$.  We handle both problems by
\emph{conditioning on the realized endogenous covariate values} and
restricting attention to \emph{isolated tetrads} in which
the endogenous covariates for the four tetrad links are invariant
across the tetrad and flipped patterns.

\subsection{Setup and Definitions}

Fix a tetrad of distinct agents $(i,j,h,k)$.  Recall that the four
\emph{tetrad links} are
$E_t:=\{ij,hk,ik,jh\}$; the remaining two links within the
quadruplet are the \emph{diagonal links} $\{ih,jk\}$.  Write
$Y_{-E_t}$ for the collection of all link outcomes outside~$E_t$ and
$\varepsilon_{E_t}:=(\varepsilon_{ij},\varepsilon_{hk},\varepsilon_{ik},\varepsilon_{jh})$ for the tetrad shocks.
Define the \emph{tetrad endogenous covariate vector}
\begin{equation}\label{eq:Xt}
  X_t:=(X_{ij},X_{hk},X_{ik},X_{jh})
\end{equation}
and the \emph{tetrad exogenous covariate vector} $\zeta_{ijhk}$ as in~\eqref{eq:zeta_ijhk_app}.

\subsection{Tetrad Isolation}

We introduce three structural conditions on admissible tetrads.
Together they guarantee that the log-odds of the tetrad pattern
versus the flipped pattern is a linear function of the covariates,
allowing the fixed effects to cancel.

For an admissible tetrad $(i,j,h,k)$ with $Y_{ih}=Y_{jk}=0$,
define the \emph{tetrad pattern} and \emph{flipped pattern}
\begin{align}
  T_t &:=\{Y_{ij}=Y_{hk}=1,\;Y_{ik}=Y_{jh}=0\},\label{eq:Tt}\\
  F_t &:=\{Y_{ij}=Y_{hk}=0,\;Y_{ik}=Y_{jh}=1\}.\label{eq:Ft}
\end{align}

\begin{assumption}[Comparison-Pattern Invariance]\label{ass:cpi}
  For any admissible tetrad $(i,j,h,k)\in\mathcal{T}_n$ and each
  tetrad link $e\in E_t$, the endogenous covariate
  $X_e=\phi_e(Y_{-e},Z)$ takes the same value under the tetrad
  pattern~$T_t$ and the flipped pattern~$F_t$, holding all non-tetrad
  links fixed.  Formally, there exists a function $\tilde\phi_e$ such
  that
  \begin{equation}\label{eq:cpi}
    X_e\big|_{T_t}
    \;=\;
    X_e\big|_{F_t}
    \;=\;
    \tilde\phi_e(Y_{-E_t},\,Z).
  \end{equation}
\end{assumption}

Assumption~\ref{ass:cpi} requires that the endogenous covariates of
the four tetrad links are invariant across the two comparison
patterns.  It is weaker than requiring that
$X_e$ not depend on any tetrad link outcome globally (as would follow
from the bilinear condition $X_{ij}=\psi(\{Y_{il}Y_{jl}\}_{l\neq
i,j},Z)$): bilinearity implies~\eqref{eq:cpi}
globally, whereas Assumption~\ref{ass:cpi} only requires it on the
comparison sample $T_t\cup F_t$.  The following examples verify the
condition for the two leading specifications.

\begin{example}[Common Friends]\label{ex:cf_cpi}
  Under $X_{ij}=\sum_{l\neq i,j}Y_{il}Y_{jl}$, each within-quadruplet
  term $Y_{al}Y_{bl}$ with $l\in\{i,j,h,k\}\setminus\{a,b\}$ involves
  a diagonal link ($Y_{ih}=0$ or $Y_{jk}=0$) and therefore vanishes.
  Hence $X_e|_{T_t\cup F_t}=\sum_{l\notin\{i,j,h,k\}}Y_{al}Y_{bl}$,
  which depends only on $Y_{-E_t}$.
\end{example}

\begin{example}[Standard Jaccard Index]\label{ex:jac_cpi}
  Under $X_{ij}=|N(i)\cap N(j)|/|N(i)\cup N(j)|$ with
  $N(i)=\{l:Y_{il}=1\}$, define the \emph{external} quantities
  \[
    \bar{c}_{ij}:=\sum_{l\notin\{i,j,h,k\}}Y_{il}Y_{jl},
    \qquad
    \bar{d}_i:=\sum_{l\notin\{j,h,k\}}Y_{il}.
  \]
  On $T_t\cup F_t$, the within-quadruplet common-friend contributions
  vanish (same argument as Example~\ref{ex:cf_cpi}), and each
  endpoint has exactly one within-tetrad neighbor under both $T_t$ and
  $F_t$ (e.g., agent~$i$ links to exactly one of $\{j,k\}$ in both
  patterns), giving
  \[
    X_{ij}\big|_{T_t\cup F_t}
    =\frac{\bar{c}_{ij}}
    {\bar{d}_i+\bar{d}_j+2-\bar{c}_{ij}},
  \]
  and similarly for $X_{hk},X_{ik},X_{jh}$.
  All quantities on the right depend only on $(Y_{-E_t},Z)$.
\end{example}

\begin{assumption}[Tetrad Exogeneity]\label{ass:tetrad_exog}
  For tetrads in the admissible set $\mathcal{T}_n$ (defined below),
  the non-tetrad link outcomes $Y_{-E_t}$ are measurable with respect
  to $\sigma(Z,A,\varepsilon_{-E_t})$, where
  $\varepsilon_{-E_t}:=\{\varepsilon_{ab}:(a,b)\notin E_t\}$.
\end{assumption}

Assumption~\ref{ass:tetrad_exog} requires that the tetrad-specific
shocks $\varepsilon_{E_t}$ do not affect any link outcome outside the
tetrad.  Under the strategic-neighborhood framework of
Section~\ref{sec:primitive}, a sufficient condition is that no agent
outside the tetrad belongs to a non-robustness component containing
a tetrad agent: formally,
$C_a\cap\{i,j,h,k\}^c=\emptyset$ for each $a\in\{i,j,h,k\}$,
where~$C_a$ is agent~$a$'s connected component in the non-robustness
graph~$\bm{D}$.  This ensures that all non-robust links involving
tetrad agents are links \emph{among} the four tetrad agents
themselves, so perturbing $\varepsilon_{E_t}$ cannot propagate outside
the tetrad.  Note that this condition does \emph{not} require the four
agents' strategic neighborhoods to be mutually disjoint---agents $i$
and~$j$ may share robust neighbors, and the link $Y_{ij}$ itself may
be non-robust, provided the non-robustness component containing~$i$
and~$j$ lies entirely within $\{i,j,h,k\}$.  Under the
bounded-strategic-neighborhood condition
(Assumption~\ref{ass:bdd_nbhd}), this isolation condition holds for a
positive fraction of quadruplets.

\begin{assumption}[Local Tetrad Uniqueness]\label{ass:local_uniq}
  For any admissible tetrad $(i,j,h,k)\in\mathcal{T}_n$, conditional
  on $(Z,A,Y_{-E_t})$, if the tetrad pattern~$T_t$ (resp.\ the flipped
  pattern~$F_t$) satisfies the best-response conditions for all four
  links $e\in E_t$, then no other tetrad configuration is
  simultaneously an equilibrium.
\end{assumption}

Assumption~\ref{ass:local_uniq} rules out equilibrium multiplicity
\emph{within the tetrad} on the comparison sample $T_t\cup F_t$.
Note that $T_t$ and~$F_t$ themselves can never coexist as equilibria
under Assumption~\ref{ass:cpi}: their best-response conditions
require the shock~$\varepsilon_e$ to fall on opposite sides of the
same threshold~$\tilde\delta_e$ for every $e\in E_t$, so the two
events are mutually exclusive.  What Assumption~\ref{ass:local_uniq}
additionally rules out is coexistence of~$T_t$ (or~$F_t$) with a
\emph{third} tetrad pattern.

\begin{remark}[Sufficient Conditions for Local Tetrad Uniqueness]%
  \label{rem:local_uniq}
  \leavevmode
  \begin{enumerate}[label=(\roman*)]
  \item \emph{Common friends
    ($X_{ij}=\sum_l Y_{il}Y_{jl}$).}
    Under Assumption~\ref{ass:cpi},
    every within-quadruplet contribution to~$X_e$ vanishes on
    $T_t\cup F_t$; hence the threshold $\tilde\delta_e$ does not
    depend on other tetrad links at all.  The four links then have no
    within-tetrad strategic interaction, and each link is determined
    independently by its own shock.  Local uniqueness is automatic.
  \item \emph{Standard Jaccard with $\gamma_0\ge 0$.}
    When
    $\gamma_0\ge 0$, Jaccard's denominator is weakly decreasing in the
    number of within-tetrad links, so a larger denominator lowers the
    covariate value and hence the threshold.  A monotonicity argument
    shows that any best-response--consistent pattern is uniquely
    pinned down: if a link is present in~$T_t$ but absent in some
    competing pattern~$P$, the threshold under~$P$ is weakly higher,
    so the shock that justified forming the link under~$T_t$ still
    justifies it under~$P$---contradicting the absence.  The symmetric
    argument applies when a link is absent in~$T_t$ but present
    in~$P$.
  \end{enumerate}
\end{remark}

\begin{definition}[Admissible Tetrad]\label{def:admissible}
  A tetrad $(i,j,h,k)$ is \emph{admissible} if the two diagonal links
  are absent:
  \[
    Y_{ih}=0\quad\text{and}\quad Y_{jk}=0.
  \]
  The set of all admissible tetrads satisfying
  Assumptions~\ref{ass:tetrad_exog} and~\ref{ass:local_uniq} is denoted $\mathcal{T}_n$.
\end{definition}

In a sparse network with expected degree $O(1)$, each link exists with
probability $O(n^{-1})$, so the diagonal-absence condition holds with
probability approaching~$1$ for any given quadruplet.  Consequently,
the number of admissible tetrads is $\Theta(n^4)$, and after the
greedy packing argument of Proposition~\ref{prop:packing} one retains
$\Theta(n)$ independent tetrads---the same asymptotic order as in the
partial-identification setting.

\subsection{Point Identification Result}

The point identification result requires a
strengthened version of Assumption~\ref{ass:tetrad_id} that
conditions on the realized endogenous covariates and the diagonal link
absences.

\begingroup
\renewcommand{\theassumption}{4$''$}
\begin{assumption}[Augmented Tetrad Estimation]\label{ass:tetrad_id_aug}
  The conclusion of Assumption~\ref{ass:tetrad_id} continues to hold
  when the conditioning is augmented to $(Z_S,\, X_t,\, Y_{ih}=0,\, Y_{jk}=0)$, where $X_t$ is the tetrad endogenous covariate vector defined in~\eqref{eq:Xt}.
\end{assumption}
\endgroup

The tetrad differencing
eliminates all individual fixed effects \emph{algebraically}---exactly as in the
$\gamma_0=0$ case---while the endogenous covariates are
handled by the isolation conditioning.  The two mechanisms do not
interfere with each other: fixed effects cancel regardless of the endogenous
covariates, and endogenous covariates decouple regardless of the
fixed effects.

\begin{theorem}[Point Identification with Endogenous Covariates and Fixed Effects]%
  \label{thm:point_id}
  Suppose Assumptions~\ref{ass:random_sampling}--\ref{ass:iid_errors},
  \ref{ass:cpi}--\ref{ass:local_uniq}, and~\ref{ass:tetrad_id_aug} hold, and
  $\varepsilon_{ij}\overset{\mathrm{iid}}{\sim}\mathrm{Logistic}(0,1)$
  independently of $(Z,A)$.
  For any admissible tetrad $(i,j,h,k)\in\mathcal{T}_n$, define
  \begin{align}
    p_{+}(\zeta,x_t) &:= \Prob\!\bigl(Y_{ij}Y_{hk}(1-Y_{ik})(1-Y_{jh})=1
      \bigm| \zeta,\, X_t=x_t,\, Y_{ih}=0,\, Y_{jk}=0\bigr),\label{eq:pplus}\\
    p_{-}(\zeta,x_t) &:= \Prob\!\bigl((1-Y_{ij})(1-Y_{hk})Y_{ik}Y_{jh}=1
      \bigm| \zeta,\, X_t=x_t,\, Y_{ih}=0,\, Y_{jk}=0\bigr).\label{eq:pminus}
  \end{align}
  Then for every admissible tetrad $(i,j,h,k)\in\mathcal{T}_n$ and
  every $(\zeta,x_t)$ in the support,
  \begin{equation}\label{eq:logodds_fe}
    \log\frac{p_+(\zeta,x_t)}{p_-(\zeta,x_t)}
    \;=\;
    \Delta Z(\zeta)'\beta_0+\Delta X(x_t)'\gamma_0,
  \end{equation}
  where $\Delta Z:=Z_{ij}+Z_{hk}-Z_{ik}-Z_{jh}$ and
  $\Delta X:=X_{ij}+X_{hk}-X_{ik}-X_{jh}$.
  If the support of $(\Delta Z,\Delta X)$ restricted to admissible
  tetrads spans $\R^{d_\beta+d_\gamma}$, then
  $\theta_0=(\beta_0,\gamma_0)$ is point identified.
\end{theorem}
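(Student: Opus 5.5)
The plan is to condition on everything except the four tetrad shocks, derive the log-odds identity conditional on $(Z,A,Y_{-E_t})$, and then integrate out $(A,Y_{-E_t})$. Throughout, $(\zeta,x_t)$ is fixed and we work on the admissible sample $\{Y_{ih}=Y_{jk}=0\}$.

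\emph{Step 1: freeze the environment.} By Assumption~\ref{ass:tetrad_exog}, $Y_{-E_t}$ is a function of $(Z,A,\varepsilon_{-E_t})$. This object includes the diagonal links and all external links, and it is independent of $\varepsilon_{E_t}$ because the shocks are i.i.d.\ and independent of $(Z,A)$. By Assumption~\ref{ass:cpi}, on $T_t\cup F_t$ each $X_e$ equals $\tilde\phi_e(Y_{-E_t},Z)$. Hence the event $\{X_t=x_t\}\cap(T_t\cup F_t)$ coincides with $\{\tilde\phi(Y_{-E_t},Z)=x_t\}\cap(T_t\cup F_t)$, and its first factor is measurable in $\mathcal{G}:=\sigma(Z,A,\varepsilon_{-E_t})$. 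Define $\tilde\delta_e:=Z_e'\beta_0+x_{t,e}'\gamma_0-A_a-A_b$ for $e=ab$.

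\emph{Step 2: conditional probabilities of the two patterns.} Condition on $\mathcal{G}$ and restrict to $\{\tilde\phi=x_t,\,Y_{ih}=Y_{jk}=0\}$. The pattern $T_t$ is realized if and only if $\varepsilon_{ij}\le\tilde\delta_{ij}$, $\varepsilon_{hk}\le\tilde\delta_{hk}$, $\varepsilon_{ik}>\tilde\delta_{ik}$, and $\varepsilon_{jh}>\tilde\delta_{jh}$.

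The two directions of this equivalence need separate arguments. If $T_t$ is realized, equilibrium best responses give the four inequalities. Conversely, if the inequalities hold, then $T_t$ satisfies the best-response conditions. Assumption~\ref{ass:local_uniq} then says no other tetrad configuration is an equilibrium, so the selected outcome must be $T_t$. The same reasoning applies to $F_t$ with the inequalities reversed.

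Write $\Lambda$ for the logistic CDF. Independence of $\varepsilon_{E_t}$ from $\mathcal{G}$ gives
\[
\Prob(T_t\mid\mathcal{G})=\Lambda(\tilde\delta_{ij})\Lambda(\tilde\delta_{hk})\bigl(1-\Lambda(\tilde\delta_{ik})\bigr)\bigl(1-\Lambda(\tilde\delta_{jh})\bigr),
\]
and the analogous product holds for $F_t$. Using $\Lambda(u)/(1-\Lambda(u))=e^{u}$, we get
\[
\Prob(T_t\mid\mathcal{G})=e^{\tilde\delta_{ij}+\tilde\delta_{hk}-\tilde\delta_{ik}-\tilde\delta_{jh}}\,\Prob(F_t\mid\mathcal{G}).
\]
The exponent equals $\Delta Z'\beta_0+\Delta X(x_t)'\gamma_0$, because every $A$ cancels exactly as in \eqref{eq:delta_v}. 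This ratio is therefore nonrandom given $(\zeta,x_t)$.

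\emph{Step 3: integrate.} Multiply both sides by the $\mathcal{G}$-measurable indicator $\ind\{\tilde\phi=x_t,\,Y_{ih}=Y_{jk}=0\}$ and take expectations conditional on $Z_S$. Since the proportionality constant does not depend on $A$ or $Y_{-E_t}$, it passes through the integral. Dividing both sides by the common conditioning probability then yields $p_+=e^{\Delta Z'\beta_0+\Delta X'\gamma_0}p_-$, which is \eqref{eq:logodds_fe}.

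The left-hand side is estimable by Assumption~\ref{ass:tetrad_id_aug}. Under the spanning condition, the linear system obtained by collecting \eqref{eq:logodds_fe} over the support has a unique solution $\theta_0$ (least squares on the design is nonsingular), which gives point identification.

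\emph{Main obstacle.} The delicate step is the measurability bookkeeping in Steps 1--2. We must make sure that conditioning on $X_t=x_t$ selects the same $\mathcal{G}$-measurable set on both patterns; this is precisely what comparison-pattern invariance provides, as opposed to a condition on the shocks. We must also make sure that equilibrium selection cannot depend on $\varepsilon_{E_t}$ beyond the four threshold inequalities; local uniqueness rules this out. If continuous components of $X_t$ make $\{X_t=x_t\}$ a null event, I would handle it through regular conditional probabilities, so that the ratio identity holds almost everywhere.
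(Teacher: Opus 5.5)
Your proposal is correct and follows the same route as the paper's proof: freeze the environment so the tetrad shocks are conditionally i.i.d.\ logistic, use comparison-pattern invariance to make the four thresholds common to $T_t$ and $F_t$, invoke local tetrad uniqueness for the converse direction, take the logistic odds ratio in which the fixed effects cancel, and integrate out $(A,Y_{-E_t})$. You condition on $\sigma(Z,A,\varepsilon_{-E_t})$ rather than on $(Z,A,Y_{-E_t})$, and you handle the non-measurable event $\{X_t=x_t\}$ explicitly by replacing it with $\{\tilde\phi(Y_{-E_t},Z)=x_t\}$ on $T_t\cup F_t$ and cancelling the common denominator; this makes rigorous a step the paper's Step~4 states somewhat loosely, but the argument is otherwise the same.
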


The proof is given in Appendix~\ref{app:proofs}.

\subsection{Constructive Estimator}

The proof of Theorem~\ref{thm:point_id} is constructive and yields a
computationally simple estimator.
\begin{enumerate}[label=(\roman*)]
  \item \emph{Select admissible tetrads.} From the observed network,
    enumerate quadruplets $(i,j,h,k)$ satisfying $Y_{ih}=0$ and
    $Y_{jk}=0$.
  \item \emph{Classify.} For each admissible tetrad, record whether the
    tetrad pattern ($Y_{ij}Y_{hk}(1-Y_{ik})(1-Y_{jh})=1$) or the
    flipped pattern ($(1-Y_{ij})(1-Y_{hk})Y_{ik}Y_{jh}=1$) obtains;
    discard tetrads with neither pattern.
  \item \emph{Conditional logit.} Among the retained tetrads, run a
    conditional logit regression of the binary outcome
    ($\text{tetrad}=1$, $\text{flipped}=0$) on the regressors
    $(\Delta Z,\Delta X)$:
    \begin{equation}\label{eq:condlogit}
      \Prob(\text{tetrad}\mid\text{tetrad or flipped},\;\zeta,\;x_t)
      =\frac{\exp\!\bigl(\Delta Z'\beta+\Delta X'\gamma\bigr)}
      {1+\exp\!\bigl(\Delta Z'\beta+\Delta X'\gamma\bigr)}.
    \end{equation}
    The maximum likelihood estimator
    $\hat\theta=(\hat\beta,\hat\gamma)$ consistently estimates
    $\theta_0$.
\end{enumerate}

\begin{remark}[Comparison with the $\gamma_0=0$ Case and \citet{graham2017econometric}]\label{rem:comparison}
  Theorem~\ref{thm:point_id} generalizes the tetrad logit of
  \citet{graham2017econometric} in two directions: it allows endogenous
  covariates ($\gamma_0\neq 0$), and it uses unconditional (marginal
  over~$A$) tetrad probabilities rather than probabilities conditional
  on the sufficient statistic for~$A$.  When $\gamma_0=0$, there are no
  endogenous covariates, the diagonal-absence condition is vacuous
  (since $X_e\equiv 0$ does not depend on any link), and
  Theorem~\ref{thm:point_id} reduces to the standard tetrad logit with
  $\log[p_+(\zeta)/p_-(\zeta)]=\Delta Z'\beta_0$.  The additional
  structural restrictions (Assumptions~\ref{ass:cpi}--\ref{ass:local_uniq} and diagonal absence)
  are the price paid for accommodating $\gamma_0\neq 0$.
\end{remark}

\begin{remark}[Abundance of Admissible Tetrads]\label{rem:abundance}
  In a sparse network with expected degree $O(1)$, $\Prob(Y_{ih}=0)=1-O(n^{-1})\to 1$
  and similarly for $Y_{jk}$.  Hence the diagonal-absence condition
  is satisfied with probability approaching $1$ for any given quadruplet, and the number of admissible tetrads is $\Theta(n^4)$.  The
  additional requirement that no outside agent shares a non-robustness
  component with a tetrad agent (for
  Assumption~\ref{ass:tetrad_exog}) is closely related to the condition used in
  the packing argument of Proposition~\ref{prop:packing}; under
  Assumption~\ref{ass:bdd_nbhd}, one retains $\Theta(n)$ independent
  admissible tetrads.
\end{remark}

\begin{remark}[Scope of the Comparison-Pattern Invariance Condition]%
  \label{rem:cpi_scope}
  Assumption~\ref{ass:cpi} covers the two most empirically important
  endogenous covariates: common friends
  $\mathrm{CF}_{ij}=\sum_{l\neq i,j}Y_{il}Y_{jl}$ and the standard
  Jaccard index
  $\mathrm{CF}_{ij}/|N(i)\cup N(j)|$
  (Examples~\ref{ex:cf_cpi}--\ref{ex:jac_cpi}).
  The bilinear condition
  ($X_{ij}=\psi(\{Y_{il}Y_{jl}\}_{l\neq i,j},Z)$) is a special case
  that implies Assumption~\ref{ass:cpi} globally, not just on
  $T_t\cup F_t$.
  More generally, Assumption~\ref{ass:cpi} is satisfied by any
  $\phi_{ij}(Y_{-ij},Z)$ whose within-tetrad dependence runs
  exclusively through the diagonal links (which are zero on admissible
  tetrads) and treats the two comparison patterns symmetrically
  in their effect on each endpoint's local neighborhood statistics.
  For the partial identification results, the bounding-by-$c$
  technique of
  Theorems~\ref{thm:tetrad}--\ref{thm:tetrad_parametric} remains
  valid for arbitrary~$\phi_{ij}$ without any comparison-pattern
  restriction.
\end{remark}

\begin{remark}[Rank Condition]\label{rem:rank}
  The rank condition requires that $(\Delta Z,\Delta X)$ has
  $(d_\beta+d_\gamma)$-dimensional variation across admissible tetrads.
  Since $\Delta X=\mathrm{CF}_{ij}+\mathrm{CF}_{hk}-\mathrm{CF}_{ik}-\mathrm{CF}_{jh}$
  (under isolation) depends on the local network structure around the
  tetrad agents, it provides genuine variation beyond $\Delta Z$ as
  long as the agents have heterogeneous local neighborhoods.  When all
  common-friend counts are zero (e.g., in an extremely sparse
  network), $\Delta X\equiv 0$ and only~$\beta_0$ can be identified;
  the estimator naturally reduces to the standard tetrad logit.
  In networks with nontrivial clustering, $\Delta X$ takes many
  distinct values, ensuring the rank condition is generically
  satisfied.
\end{remark}

\section{Simulation Based on Tetrad Restrictions}

\label{sec:simulation} 

We examine the finite-sample behavior of the tetrad-based partial identification
approach through simulation, focusing on how the network size
$n$, unobserved heterogeneity (fixed effects) $A$,
the support size of the discrete exogenous covariate $Z$,
and the endogenous covariate $X$ affect the sharpness
of the identified set.
For simplicity, we use only the baseline tetrad restrictions (Theorems~\ref{thm:tetrad}--\ref{thm:tetrad_parametric}), leaving the longer-cycle restrictions and aggregation over $\mathcal{W}$ for future work.


\subsection{Model Specification and Data Generating Process}

\label{subsec:dgp} 

We conduct simulation studies under three nested specifications, which
differ by whether individual fixed effects and/or endogenous covariates
are present.

\paragraph{Baseline Model.}

The individual-level exogenous characteristics $Z_{i}$ is a two-dimensional
vector $(Z_{1i},Z_{2i})\in[-10,10]^{2}$, and the vector of exogenous
dyadic covariates is constructed as $Z_{ij}=|Z_{i}-Z_{j}|$. The true
link formation equation is
\begin{equation}
  Y_{ij}=\ind\left\{ Z_{ij,1}+\gamma_{0}Z_{ij,2}+\varepsilon_{ij}\geq0\right\} .\label{eq:link_formation_baseline}
\end{equation}
Under all three specifications, the shocks $\varepsilon_{ij}$ are
i.i.d. following a $\text{Logistic}(0,1)$ distribution.

\paragraph{FE-only Model.}

We add individual-level unobserved fixed effects $A_{i}\sim\mathcal{N}(\rho Z_{1i},\sigma_{A}^{2})$
to the baseline model, where $\rho\in[0,1]$ captures the correlation
between unobserved heterogeneity and observed characteristics. The
true link formation equation becomes
\begin{equation}
  Y_{ij}=\ind\left\{ Z_{ij,1}+\gamma_{0}Z_{ij,2}+A_{i}+A_{j}+\varepsilon_{ij}\geq0\right\} .\label{eq:link_formation_fe}
\end{equation}
Note that this is equivalent to the main model~\eqref{eq:link_formation} with $\tilde A_i:=-A_i$ playing the role of the fixed effect (and $-\varepsilon_{ij}$ for the shock, which has the same logistic distribution by symmetry).\footnote{Under this reparametrization, $\tilde A_i\sim\mathcal{N}(-\rho Z_{1i},\sigma_A^2)$ and equation~\eqref{eq:link_formation_fe} becomes $Y_{ij}=\ind\{Z_{ij,1}+\gamma_0 Z_{ij,2}+\varepsilon_{ij}\ge\tilde A_i+\tilde A_j\}$, matching the sign convention in~\eqref{eq:link_formation}.  The tetrad restrictions are invariant to this reparametrization since $A_i$ cancels in all cases.}

\paragraph{Full Model.}

We further add an endogenous dyadic covariate $X_{ij}$ to the model
while reducing the dimension of $Z_{ij}$ to 1, i.e., $Z_{i}\in\mathcal{Z}=[-10,10]$
and $Z_{ij}=|Z_{i}-Z_{j}|$. We set the coefficient for $Z_{ij}$
to be $\beta_{0}=1$, fixed and known. The endogenous covariate is the Jaccard index of common friends:
\[
  X_{ij} = \frac{|N(i)\cap N(j)|}{|N(i)\cup N(j)|},
\]
where $N(i)=\{k:\,Y_{ik}=1\}$ denotes the neighbor set of $i$.
The true link formation equation becomes
\begin{equation}
  Y_{ij}=\ind\left\{ Z_{ij}+\gamma_{0}X_{ij}+A_{i}+A_{j}+\varepsilon_{ij}\geq0\right\} .\label{eq:link_formation_full}
\end{equation}
Under this model, the probabilities involved in the identifying restriction
\eqref{eq:tetrad_main} no longer have a closed form. Therefore, we
need to approximate the probabilities using a realized network.

In order to generate a realized network, we first discretize $\mathcal{Z}$
and generate $Z_{i}$ from $\mathcal{Z}$ by randomly sampling with
equal probabilities. Since $X_{ij}$ depends on the realized network
$Y$, it requires solving for a network consistent with \eqref{eq:link_formation_full}
given the shocks and fixed effects. We set $A_{i}\overset{\text{i.i.d}}{\sim}\mathcal{N}(0,1)$
and generate data by iterating a link-update procedure (starting from
a zero $Y$ matrix and recomputing $X_{ij}$ after each update) until
convergence. The parameter to be identified is $\gamma_{0}$.

\begin{remark}[Simulation Design Choices]\label{rem:sim_design}
  Several design choices merit discussion. The value $\gamma_{0}=4$
  was chosen to ensure that the endogenous covariate has a substantial
  effect on link formation, making the strategic interaction component
  quantitatively important. Since $X_{ij}\in[0,1]$,
  a coefficient of $\gamma_{0}=4$ means that dyads with identical neighborhoods
  ($X_{ij}=1$) receive a boost of $4$ in the latent
  index, comparable to the standard deviation of the logistic shock
  ($\pi/\sqrt{3}\approx1.81$). Results for moderate values of $\gamma_{0}$
  may yield tighter identified sets.

  The best-response iteration starting from the empty network selects
  one particular equilibrium when multiple equilibria exist. We do not
  verify uniqueness of the equilibrium; thus our simulation results
  are conditional on this particular equilibrium selection rule. The
  identified set derived from our restrictions is valid regardless of
  equilibrium selection (Theorem~\ref{thm:tetrad}), but the finite-sample
  informativeness of the criterion may vary across equilibria.
\end{remark}


We use $\gamma$ to denote the parameter to be identified. Following
Theorem~\ref{thm:tetrad}, we evaluate the (sample analogue of the)
tetrad criterion $Q^{\mathrm{tetrad}}(\gamma)$ defined in
\eqref{eq:tetrad_criterion}, where the relevant conditional probabilities
$p_{L}$ and $p_{U}$ are computed (i) in closed form under the baseline model,
(ii) by Monte Carlo integration over fixed effects under the FE-only model,
and (iii) by sample proportions under the full model.
The identified set is the set of values for which the criterion is non-positive, i.e. $Q^{\mathrm{tetrad}}(\gamma) \leq 0.$

Under the full model, we also implement the stronger criterion that
enforces both sides of the tetrad restriction under the parametric assumption on the logistic distribution of $\varepsilon_{ij}$, as in Theorem~\ref{thm:tetrad_parametric}; the resulting
identified set is denoted by $\Gamma_{\mathrm{strict}}$.

We evaluate the criterion functions on a finite grid of $\gamma$ values. The supremum
is computed by the GenSA algorithm under the baseline model and the
FE-only model, while it is computed by grid-search under the full
model.


\subsection{Simulation Results}

\label{subsec:sim_results} 


\subsubsection{Baseline Model}

\label{subsec:results_baseline} 

Figure \ref{fig:baseline} shows how the value of the main criterion
function $Q(\gamma)$ changes with $\gamma$ between $-10$ and $10$
when the true parameter is $\gamma_{0}=1$. In the figure, $Q$ is plotted after shifting by $+1$, so the identified set $\{\gamma:Q(\gamma)\le0\}$ corresponds to the region where the plotted curve attains its minimum value of~$1$. The identified set for
$\gamma$ is $[1,5]$.

\begin{figure}[htbp]
  \centering{}\includegraphics[scale=0.5]{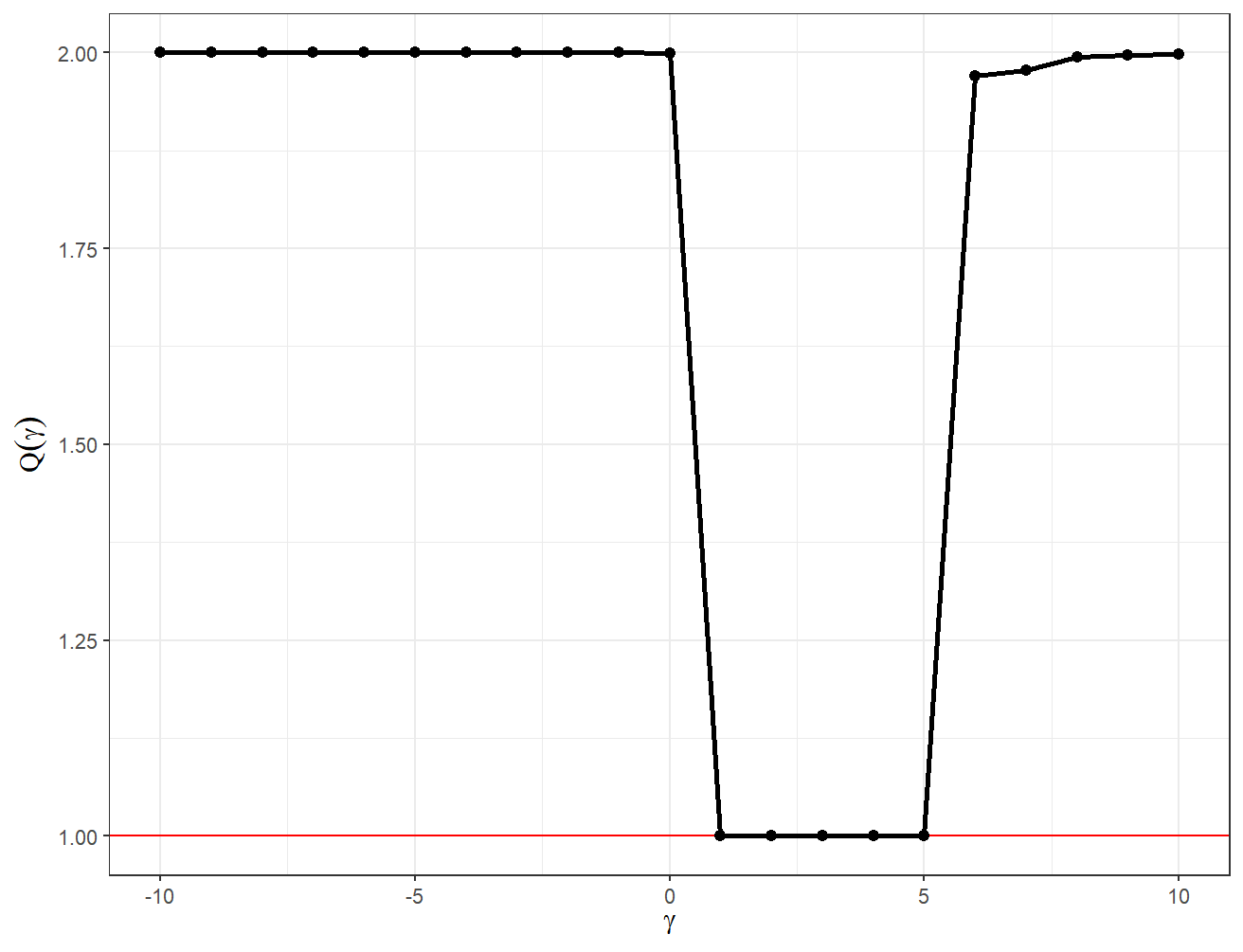}
  \caption{Baseline criterion $Q$ as a function of $\gamma$.}
  \label{fig:baseline}
\end{figure}


\subsubsection{FE-only Model}

\label{subsec:results_fe_only} 

Table \ref{tab:FE_only} reports the identifying results for $\gamma$
under the FE-only model \eqref{eq:link_formation_fe} with fixed-effect
designs that vary (i) the dispersion of individual fixed effects and
(ii) the strength of correlation between individual fixed effects
and observed individual characteristics $Z_{i}$.

\begin{table}[htbp]
  \centering \caption{FE-only model: Identified sets for $\gamma$ under alternative fixed-effect
  designs.}
  \label{tab:FE_only}
  \global\long\def\arraystretch{1.2}%
  \setlength{\tabcolsep}{8pt} %
  \begin{tabular}{lcccc}
    \hline
    & (1)  & (2)  & (3)  & (4) \tabularnewline
    \hline
    Correlation with $Z_{1i}$  & No  & No  & Weak  & Strong \tabularnewline
    SD of fixed effects ($\sigma_{A}$)  & $1$  & $5$  & $1$  & $1$ \tabularnewline
    Identified set for $\gamma$  & $[1,7]$  & $[0,\infty)$  & $[1,6]$  & $[1,\infty)$ \tabularnewline
    \hline
  \end{tabular}

  \vspace{3pt}
  \textit{\footnotesize{}Notes.}{\footnotesize{} The identified set
    is $\{\gamma:Q(\gamma)\le0\}$. In correlated designs, $A_{i}=\rho Z_{1i}+u_{i}$
    with $u_{i}\sim\mathcal{N}(0,\sigma_{A}^{2})$; ``no'', ``weak'',
    and ``strong'' correlation means $\rho=0$, $\rho=0.1$, and $\rho=0.9$,
  respectively.}
\end{table}

Under the FE-only model, introducing unobserved heterogeneity substantially
weakens identification of $\gamma$ relative to the baseline model
without fixed effects. In particular, while the baseline model yields
a finite identified set $[1,5]$ when the true value is $\gamma_{0}=1$,
the FE-only model produces noticeably wider sets that become sensitive
to both the dispersion and endogeneity of fixed effects. When fixed
effects are uncorrelated with individual characteristics and have
small dispersion, the identified set remains bounded above but expands
a little compared to the baseline; increasing heterogeneity
makes the upper bound disappear, yielding one-sided identification.
Holding dispersion fixed, correlation between the fixed effects and
individual characteristics also weakens identification when the correlation
is strong. Overall, unobserved heterogeneity, especially when strongly
correlated with observables or with large dispersion, reduces the
informativeness of the identifying criterion and tends to eliminate
finite upper bounds on $\gamma$, but the sign of $\gamma$ can still
be identified in all cases.


\subsubsection{Full Model}

\label{subsec:results_full} 

Table~\ref{tab:full_jaccard} reports the identified sets obtained
from both the main criterion $Q(\gamma)\le0$ and the strict criterion
(Theorem~\ref{thm:tetrad_parametric}), using the Jaccard index as the
endogenous covariate with $n=100$ and a single network draw.

\begin{table}[htbp]
  \centering
  \caption{Full model: Identified sets for $\gamma$ ($n=100$, $\gamma_{0}=4$, Jaccard index).}
  \label{tab:full_jaccard}
  \global\long\def\arraystretch{1.15}%
  \setlength{\tabcolsep}{8pt}%
  \begin{tabular}{c cc}
    \hline
    $|\mathcal{Z}|$ & $Q(\gamma)\le0$ & Strict \tabularnewline
    \hline
    $3$  & $(-\infty,\,28]$ & $(-\infty,\,24]$   \tabularnewline
    $9$  & $(-\infty,\,20]$ & $[-26,\,18]$        \tabularnewline
    $15$ & $(-\infty,\,20]$ & $[-5,\,19]$         \tabularnewline
    $21$ & $(-\infty,\,17]$ & $\mathbf{[4,\,11]}$ \tabularnewline
    \hline
  \end{tabular}

  \vspace{3pt}
  \noindent
  \begin{minipage}[c]{1\linewidth}%
    \raggedright \textit{\footnotesize{}Notes.}{\footnotesize{}
      $(-\infty,\,\bar\gamma]$ indicates a lower endpoint at the
      search-grid boundary $-40$. Results are based on a single network draw.} %
  \end{minipage}
\end{table}

The identified set tightens as the support size $|\mathcal{Z}|$
grows, since more exogenous variation enables more informative tetrad comparisons. At $|\mathcal{Z}|=21$, the
strict identified set is $[4,11]$, a bounded interval of width $7$
containing the true value $\gamma_{0}=4$. This shows that
the tetrad restrictions can produce nontrivial bounds on
the strategic-interaction parameter even with
endogenous covariates and unobserved individual fixed effects.

These results are preliminary: they are based on a single network draw at
$n=100$, and the current design fixes $\beta_0$ at its true value and searches only over $\gamma$, which does not address the challenges of joint identification over the full parameter vector $(\beta,\gamma)$. A more systematic Monte Carlo study---varying network sizes, averaging across repeated draws, searching over the full parameter space, and
constructing formal confidence sets for the identified set (e.g.,
along the lines of \citealt{andrews2013inference} and
\citealt*{chernozhukov2013intersection})---is under
investigation and will be reported in a subsequent version of this
paper.

\section{Conclusion}\label{sec:discussion}

This paper develops a tractable identification approach for strategic network formation models with endogenous network statistics and unobserved individual fixed effects. The main idea is a ``bounding-by-$c$'' construction applied to subnetwork configurations (tetrads, triads, and more general weighted cycles), which produces moment-inequality restrictions without requiring characterization of the equilibrium mapping. Section~\ref{sec:primitive} gives primitive conditions under which the high-level Assumption~\ref{ass:tetrad_id} holds, embedding the model in the sparse network framework of \citet{leung2019treatment} and establishing consistency of the tetrad conditional probability estimator via a greedy packing argument.

Several directions remain for future work. First, the central limit theorems of \citet{leung2019normal} and \citet{menzel2021clt} can be used to develop formal inference procedures for the identified sets, building on the econometric theory of inference based on conditional moment inequalities \citep*{andrews2013inference,chernozhukov2013intersection}. Second, a more systematic set of Monte Carlo simulations would provide a fuller picture of the finite-sample performance of both the set-identification approach and the point identification result, together with the associated estimator. Third, an empirical application using real-world network data is a natural next step.

\bibliographystyle{ecta}
\bibliography{GLX_StratNet}

\begin{thebibliography}{22}
\newcommand{\enquote}[1]{``#1''}
\expandafter\ifx\csname natexlab\endcsname\relax\def\natexlab#1{#1}\fi

\bibitem[\protect\citeauthoryear{Aguirregabiria and Mira}{Aguirregabiria and Mira}{2007}]{aguirregabiria2007sequential}
\textsc{Aguirregabiria, V. and P.~Mira} (2007): \enquote{Sequential estimation of dynamic discrete games,} \emph{Econometrica}, 75, 1--53.

\bibitem[\protect\citeauthoryear{Andrews and Shi}{Andrews and Shi}{2013}]{andrews2013inference}
\textsc{Andrews, D.~W. and X.~Shi} (2013): \enquote{Inference based on conditional moment inequalities,} \emph{Econometrica}, 81, 609--666.

\bibitem[\protect\citeauthoryear{Bajari, Benkard, and Levin}{Bajari et~al.}{2007}]{bajari2007estimating}
\textsc{Bajari, P., C.~L. Benkard, and J.~Levin} (2007): \enquote{Estimating dynamic models of imperfect competition,} \emph{Econometrica}, 75, 1331--1370.

\bibitem[\protect\citeauthoryear{Bonhomme}{Bonhomme}{2012}]{bonhomme2012functional}
\textsc{Bonhomme, S.} (2012): \enquote{Functional Differencing,} \emph{Econometrica}, 80, 1337--1385.

\bibitem[\protect\citeauthoryear{Bonhomme and Dano}{Bonhomme and Dano}{2023}]{bonhomme2023functional}
\textsc{Bonhomme, S. and K.~Dano} (2023): \enquote{Functional Differencing in Networks,} \emph{Revue {\'E}conomique}.

\bibitem[\protect\citeauthoryear{Bonhomme, Dano, and Graham}{Bonhomme et~al.}{2025}]{bonhomme2025moment}
\textsc{Bonhomme, S., K.~Dano, and B.~S. Graham} (2025): \enquote{Moment Restrictions for Nonlinear Panel Data Models with Feedback,} \emph{arXiv preprint arXiv:2506.12569}.

\bibitem[\protect\citeauthoryear{Chernozhukov, Lee, and Rosen}{Chernozhukov et~al.}{2013}]{chernozhukov2013intersection}
\textsc{Chernozhukov, V., S.~Lee, and A.~M. Rosen} (2013): \enquote{Intersection bounds: Estimation and inference,} \emph{Econometrica}, 81, 667--737.

\bibitem[\protect\citeauthoryear{Dano, Honor{\'e}, and Weidner}{Dano et~al.}{2025}]{dano2025binary}
\textsc{Dano, K., B.~E. Honor{\'e}, and M.~Weidner} (2025): \enquote{Binary Choice Logit Models with General Fixed Effects for Panel and Network Data,} \emph{arXiv preprint arXiv:2508.11556}.

\bibitem[\protect\citeauthoryear{de~Paula}{de~Paula}{2020}]{de2020econometric}
\textsc{de~Paula, {\'A}.} (2020): \enquote{Econometric models of network formation,} \emph{Annual Review of Economics}, 12, 775--799.

\bibitem[\protect\citeauthoryear{de~Paula, Richards-Shubik, and Tamer}{de~Paula et~al.}{2018}]{de2018identifying}
\textsc{de~Paula, {\'A}., S.~Richards-Shubik, and E.~Tamer} (2018): \enquote{Identifying preferences in networks with bounded degree,} \emph{Econometrica}, 86, 263--288.

\bibitem[\protect\citeauthoryear{Dzemski}{Dzemski}{2019}]{dzemski2019empirical}
\textsc{Dzemski, A.} (2019): \enquote{An empirical model of dyadic link formation in a network with unobserved heterogeneity,} \emph{Review of Economics and Statistics}, 101, 763--776.

\bibitem[\protect\citeauthoryear{Gao}{Gao}{2020}]{gao2020}
\textsc{Gao, W.~Y.} (2020): \enquote{Nonparametric Identification in Index Models of Link Formation,} \emph{Journal of Econometrics}, 215, 399--413.

\bibitem[\protect\citeauthoryear{Gao, Li, and Xu}{Gao et~al.}{2023}]{gao2023logical}
\textsc{Gao, W.~Y., M.~Li, and S.~Xu} (2023): \enquote{Logical differencing in dyadic network formation models with nontransferable utilities,} \emph{Journal of Econometrics}, 235, 302--324.

\bibitem[\protect\citeauthoryear{Gao and Wang}{Gao and Wang}{2026}]{gao2026identification}
\textsc{Gao, W.~Y. and R.~Wang} (2026): \enquote{Identification in nonlinear dynamic panel models under partial stationarity,} \emph{Journal of Econometrics}, 253, 106185.

\bibitem[\protect\citeauthoryear{Graham}{Graham}{2017}]{graham2017econometric}
\textsc{Graham, B.~S.} (2017): \enquote{An Econometric Model of Network Formation with Degree Heterogeneity,} \emph{Econometrica}, 85, 1033--1063.

\bibitem[\protect\citeauthoryear{Jackson and Watts}{Jackson and Watts}{2002}]{jackson2002evolution}
\textsc{Jackson, M.~O. and A.~Watts} (2002): \enquote{The evolution of social and economic networks,} \emph{Journal of economic theory}, 106, 265--295.

\bibitem[\protect\citeauthoryear{Leung}{Leung}{2019}]{leung2019treatment}
\textsc{Leung, M.~P.} (2019): \enquote{A weak law for moments of pairwise-stable networks,} \emph{Journal of Econometrics}, 210, 310--326.

\bibitem[\protect\citeauthoryear{Leung and Moon}{Leung and Moon}{2025}]{leung2019normal}
\textsc{Leung, M.~P. and H.~R. Moon} (2025): \enquote{Normal approximation in large network models,} \emph{arXiv preprint arXiv:1904.11060v5}.

\bibitem[\protect\citeauthoryear{Mele}{Mele}{2017}]{mele2017structural}
\textsc{Mele, A.} (2017): \enquote{A Structural Model of Homophily and Clustering in Social Networks,} Working paper.

\bibitem[\protect\citeauthoryear{Menzel}{Menzel}{2021}]{menzel2021clt}
\textsc{Menzel, K.} (2021): \enquote{Central Limit Theory for Models of Strategic Network Formation,} Working paper, New York University.

\bibitem[\protect\citeauthoryear{Menzel}{Menzel}{2026}]{menzel2026strategic}
---\hspace{-.1pt}---\hspace{-.1pt}--- (2026): \enquote{Strategic network formation with many agents,} \emph{Journal of Econometrics}, 253, 106174.

\bibitem[\protect\citeauthoryear{Sheng}{Sheng}{2020}]{sheng2020structural}
\textsc{Sheng, S.} (2020): \enquote{A structural econometric analysis of network formation games through subnetworks,} \emph{Econometrica}, 88, 1829--1858.

\end{thebibliography}

\appendix
\section{Proofs}\label{app:proofs}

\subsection{Proof of Proposition~\ref{prop:packing}}
\begin{proof} We prove the result using a packing argument. 
   For a tetrad $t=(i,j,h,k)$, define its \emph{dependence set}
  \[
    D_t \;:=\; \bigcup_{a\in\{i,j,h,k\}} C_a^+.
  \]
  This is the set of all agents whose primitives can affect the tetrad
  outcome~$E_t$.  Indeed, by
  Assumption~\ref{ass:decentral-lm}, each link among
  $\{i,j,h,k\}$ is determined by the equilibrium on the strategic
  neighborhood containing its endpoints, and all such neighborhoods
  are contained in~$D_t$.
  Since the connected components $C_a$ of the non-robustness graph
  partition $\mathcal{N}_n$, each agent belongs to exactly one
  component; however, the strategic neighborhoods $C_a^+$ may
  overlap.  By the union bound and
  Assumption~\ref{ass:bdd_nbhd},
  $\mathbb{E}[|D_t|]\le \sum_{a\in t}\mathbb{E}[|C_a^+|]=O(1)$.

  Define a \emph{conflict graph} $G_D$ on the set of all tetrads with
  $\zeta_t=\zeta$: two tetrads $t,t'$ are adjacent in $G_D$ if and
  only if $D_t\cap D_{t'}\ne\emptyset$.  Since
  $D_{t'}=\bigcup_{a\in t'}C_a^+$, the tetrads $t$ and $t'$ conflict
  if and only if at least one agent $a\in t'$ satisfies
  $C_a^+\cap D_t\ne\emptyset$.  Define the \emph{second-order
  dependence set}
  \[
    D_t^{(2)}\;:=\;\bigl\{a\in\mathcal{N}_n:C_a^+\cap D_t\ne\emptyset\bigr\}.
  \]

  We now bound the two key quantities:
  \begin{itemize}
    \item \emph{Number of vertices.}
      Since augmented types $\tilde Z_i=(Z_i,A_i)$ are i.i.d.\
      (Assumption~\ref{ass:random_sampling}) and $z_S$ lies in the finite support
      of~$Z_S$, the number of tetrads with $Z_S=z_S$ is
      $N=\binom{n}{4}\mathbb{P}(Z_S=z_S)=\Theta(n^4)$.
    \item \emph{Average degree.}
      A tetrad $t'$ conflicts with $t$ only if at least one agent of
      $t'$ lies in~$D_t^{(2)}$.  Under the subcriticality condition of
      \citet{leung2019treatment}, $|C_i^+|$ has exponentially
      decaying tails; iterating the branching-process bound once
      more shows that $|D_t^{(2)}|$ also has bounded
      expectation: $\mathbb{E}[|D_t^{(2)}|]=O(1)$.
      Each agent in $D_t^{(2)}$ can appear in at most $\binom{n-1}{3}$
      other tetrads, so the expected degree of~$t$ in $G_D$ satisfies
      $\mathbb{E}[\deg(t)]\le\mathbb{E}[|D_t^{(2)}|]\cdot n^3=O(n^3)$.
      Since this bound holds for every tetrad~$t$, the average degree
      in $G_D$ is $\bar d=O(n^3)$.
  \end{itemize}
  By the Tur\'an bound, $G_D$ contains an
  independent set of size at least $N/(\bar d+1)\ge
  \Theta(n^4)/O(n^3)=\Theta(n)$.
  That is, the greedy algorithm---which sequentially selects any
  tetrad not conflicting with those previously selected---packs at
  least $m_n=\Theta(n)$ tetrads $t_1,\ldots,t_{m_n}$ with pairwise
  disjoint dependence sets.  (The greedy selection is deterministic
  given the realization of $(Z,A,\varepsilon)$, so the selected
  tetrads are random.  The key probabilistic
  properties---independence and identical conditional means---hold
  for these selected tetrads as shown below.)

  Since $D_{t_s}\cap D_{t_{s'}}=\emptyset$ for $s\ne s'$, the
  primitives determining each $E_{t_s}$ are disjoint subcollections
  of i.i.d.\ random variables
  ($\tilde Z_i$ across agents, $\varepsilon_{ij}$ across pairs).
  By Assumption~\ref{ass:decentral-lm}, the equilibrium on each
  strategic neighborhood is determined only by the primitives within
  it.  Moreover, the endogenous covariates $X_{ij}$ for
  $i,j\in t_s$ depend only on link outcomes incident to $i$ or~$j$
  (Assumption~\ref{ass:local_ext}).  Any such link $(i,k)$ with
  $k\notin D_{t_s}$ is robustly absent---since $k$ is not in any
  strategic neighborhood of a member of~$t_s$, we have $Y_{ik}=0$---so
  $X_{ij}$ depends only on primitives within $D_{t_s}$.
  Therefore $\{\ind\{E_{t_s}\}\}_{s=1}^{m_n}$ are mutually
  independent.

  Moreover, because
  $\tilde Z_i=(Z_i,A_i)$ is i.i.d.\ across agents
  (Assumption~\ref{ass:random_sampling}), any two tetrads with
  $Z_S=z_S$ have the same conditional distribution of outcomes.
  The identical-mean property holds because each $E_{t_s}$ depends
  only on primitives within~$D_{t_s}$, and conditional on
  $Z_{S,t_s}=z_S$, the distribution of these primitives is
  determined by the i.i.d.\ structure of $(\tilde Z_i,\varepsilon_{ij})$
  and does not depend on which tetrad was selected.
  In particular,
  \[
    \mathbb{E}\bigl[\ind\{E_{t_s}\}\mid Z_{S,t_s}=z_S\bigr]
    \;=\;
    p^{(n)}(z_S)
    \quad\text{for all } s=1,\ldots,m_n.
  \]

  Since $\{\ind\{E_{t_s}\}\}_{s=1}^{m_n}$ are i.i.d.\ Bernoulli
  random variables (conditional on $Z_{S,t_s}=z_S$ for all~$s$)
  and $m_n\to\infty$, the strong law of large numbers gives
  \[
    \frac{1}{m_n}\sum_{s=1}^{m_n}\ind\{E_{t_s}\}
    \;\xrightarrow{\;p\;}\;
    p^{(n)}(z_S).
  \]
  This establishes that an oracle with access to the packed tetrads
  could consistently estimate $p^{(n)}(z_S)$.  Showing that the
  natural full-sample average over all observable matched tetrads
  (a U-statistic) is also consistent requires additional arguments
  controlling the contribution of dependent tetrad pairs; we leave
  this formal extension to future work.
\end{proof}

\subsection{Proof of Theorem~\ref{thm:point_id}}

\begin{proof}
  Write $\Lambda(x):=(1+e^{-x})^{-1}$ for the logistic CDF.
  Fix an admissible tetrad $(i,j,h,k)\in\mathcal{T}_n$.

  \medskip\noindent\emph{Step~1: Conditional independence of tetrad shocks.}
  By Assumption~\ref{ass:tetrad_exog},
  $Y_{-E_t}\in\sigma(Z,A,\varepsilon_{-E_t})$.  Since
  $\{\varepsilon_{ij}\}_{i<j}$ are i.i.d.\
  $\mathrm{Logistic}(0,1)$ independently of $(Z,A)$, we have
  $\varepsilon_{E_t}\perp\sigma(Z,A,\varepsilon_{-E_t})$.
  Because
  $\sigma(Z,A,Y_{-E_t})\subseteq\sigma(Z,A,\varepsilon_{-E_t})$,
  it follows that conditional on $(Z,A,Y_{-E_t})$, the four tetrad
  shocks remain i.i.d.\ $\mathrm{Logistic}(0,1)$.

  \medskip\noindent\emph{Step~2: Best-response conditions on $T_t$
  and $F_t$.}
  By Assumption~\ref{ass:cpi}, on both $T_t$ and $F_t$ (with
  $Y_{ih}=Y_{jk}=0$), each $X_e$ equals
  $\tilde\phi_e(Y_{-E_t},Z)$, which is determined by
  $(Z,A,Y_{-E_t})$.
  Define
  $\tilde\delta_e:=Z_e'\beta_0+\tilde\phi_e(Y_{-E_t},Z)'\gamma_0$
  and
  $\tilde\delta_e^A:=\tilde\delta_e-A_{i(e)}-A_{j(e)}$.
  The best-response conditions for~$T_t$ to be an equilibrium are
  \[
    \varepsilon_{ij}\le\tilde\delta_{ij}^A,\quad
    \varepsilon_{hk}\le\tilde\delta_{hk}^A,\quad
    \varepsilon_{ik}>\tilde\delta_{ik}^A,\quad
    \varepsilon_{jh}>\tilde\delta_{jh}^A.
  \]
  Symmetrically, the best-response conditions for~$F_t$ reverse
  every inequality.  Because the same thresholds
  $\tilde\delta_e^A$ appear in both patterns (by
  comparison-pattern invariance), the two events occupy disjoint
  rectangles in~$\R^4$.

  \medskip\noindent\emph{Step~3: Odds ratio.}
  By Assumption~\ref{ass:local_uniq}, whenever the shocks fall in
  the rectangle for~$T_t$ (resp.~$F_t$), the tetrad pattern~$T_t$
  (resp.~$F_t$) is the unique equilibrium.  Since the tetrad shocks
  are i.i.d.\ $\mathrm{Logistic}(0,1)$ conditional on
  $(Z,A,Y_{-E_t})$:
  \begin{align*}
    \frac{\Prob(T_t\mid Z,A,Y_{-E_t})}
    {\Prob(F_t\mid Z,A,Y_{-E_t})}
    &=\prod_{e\in\{ij,hk\}}
      \frac{\Lambda(\tilde\delta_e^A)}{1-\Lambda(\tilde\delta_e^A)}
      \cdot\prod_{e\in\{ik,jh\}}
      \frac{1-\Lambda(\tilde\delta_e^A)}{\Lambda(\tilde\delta_e^A)}\\
    &=\exp\!\Bigl[\sum_{e\in\{ij,hk\}}\tilde\delta_e^A
      -\sum_{e\in\{ik,jh\}}\tilde\delta_e^A\Bigr].
  \end{align*}
  The fixed-effect terms cancel:
  \[
    (A_i+A_j)+(A_h+A_k)-(A_i+A_k)-(A_j+A_h)=0,
  \]
  giving
  \[
    \frac{\Prob(T_t\mid Z,A,Y_{-E_t})}
    {\Prob(F_t\mid Z,A,Y_{-E_t})}
    =\exp\!\bigl(\Delta Z'\beta_0+\Delta\tilde X'\gamma_0\bigr),
  \]
  where $\Delta\tilde X:=\tilde\phi_{ij}+\tilde\phi_{hk}
  -\tilde\phi_{ik}-\tilde\phi_{jh}$.
  The ratio does not depend on~$A$.

  \medskip\noindent\emph{Step~4: Integration.}
  Since $\exp(\Delta Z'\beta_0+\Delta\tilde X'\gamma_0)$ depends
  on $(Z,A,Y_{-E_t})$ only through $(\zeta,\tilde X_t)$---and on
  $T_t\cup F_t$ we have $\tilde X_t=X_t$ by
  Assumption~\ref{ass:cpi}---the ratio is constant given
  $(\zeta,x_t)$ and factors out when taking expectations over
  $(A,Y_{-E_t})$ conditional on
  $(\zeta,X_t=x_t,Y_{ih}=0,Y_{jk}=0)$:
  \begin{align*}
    p_+(\zeta,x_t)
    &=\E\!\bigl[\Prob(T_t\mid Z,A,Y_{-E_t})
    \bigm|\zeta,x_t,Y_{ih}=0,Y_{jk}=0\bigr]\\
    &=\exp(\Delta Z'\beta_0+\Delta X'\gamma_0)\cdot
    \E\!\bigl[\Prob(F_t\mid Z,A,Y_{-E_t})
    \bigm|\zeta,x_t,Y_{ih}=0,Y_{jk}=0\bigr]\\
    &=\exp(\Delta Z'\beta_0+\Delta X'\gamma_0)\cdot p_-(\zeta,x_t).
  \end{align*}
  Taking logs yields~\eqref{eq:logodds_fe}.

  The equation
  $\log[p_+(\zeta,x_t)/p_-(\zeta,x_t)]=(\Delta Z,\Delta X)'
  (\beta_0,\gamma_0)$ is a system of moment equalities indexed by
  $(\zeta,x_t)$.  Under the rank condition, this system has a unique
  solution.
\end{proof}

\end{document}